\documentclass[aip,amsmath,amssymb,preprint]{revtex4-1}

\usepackage{graphicx}
\usepackage{bm}
\usepackage{amsthm}
\usepackage{fullpage}
\usepackage{nicefrac}
\usepackage{tikz}
\usetikzlibrary{decorations.markings}
\usepackage{float}
\usepackage[utf8]{inputenc}
\usepackage[T1]{fontenc}
\usepackage{mathptmx}
\usepackage{etoolbox}

\newcommand{\derh}[2]{\frac{\partial H}{\partial #1_{#2}}}
\newcommand{\derf}[3]{\frac{\partial f_{#1}}{\partial #2_{#3}}}
\newcommand{\derderf}[5]{\frac{\partial^2 f_{#1}}{\partial #2_{#3} \partial #4_{#5}}}

\newcommand{\derderh}[4]{\frac{\partial^2 H}{\partial #1_{#2} \partial #3_{#4}}}
\newcommand{\derq}[2]{\frac{\partial \vec{q}_{#1}}{\partial \vec{b}_{#2}}}
\newcommand{\derp}[2]{\frac{\partial \vec{p}_{#1}}{\partial \vec{b}_{#2}}}
\newcommand{\p}{\vec{p}}
\newcommand{\q}{\vec{q}}
\newtheorem{theorem}{Theorem}[section]
\newtheorem{corollary}{Corollary}[theorem]
\newtheorem{lemma}[theorem]{Lemma}
\newtheorem{definition}{Definition}[section]

\makeatletter
\def\@email#1#2{%
 \endgroup
 \patchcmd{\titleblock@produce}
  {\frontmatter@RRAPformat}
  {\frontmatter@RRAPformat{\produce@RRAP{*#1\href{mailto:#2}{#2}}}\frontmatter@RRAPformat}
  {}{}
}%
\makeatother

\begin{document}
\title{Generalized Gelfand-Yaglom Formula for a Discretized Quantum Mechanic System}
\author{Meredith Shea}
\altaffiliation{mshea@berkeley.edu}
\affiliation{University of California Berkeley, Department of Mathematics, Berkeley, United States}

\date{\today}

\begin{abstract}
    The Gelfand-Yaglom formula relates the regularized determinant of a differential operator to the solution of an initial value problem. Here we develop a generalized Gelfand-Yaglom formula for a Hamiltonian system with Lagrangian boundary conditions in the discrete and continuous settings. Later we analyze the convergence of the discretized Hamilton-Jacobi operator and propose a lattice regularization for the determinant.
\end{abstract}

\maketitle 

\section{Introduction} \label{section1}
\subsection{The Gelfand-Yaglom Formula} \label{section1.1}
In Gelfand and Yaglom\cite{GY}, they study the evaluation of certain integrals with respect to the Wiener measure. In their paper, they found that the solution to certain integrals of exponentials can be expressed in terms of a solution to a Sturm-Liousville problem. Later this formula was interpreted as a relation between the regularized-determinant of an elliptic operator and the solution to an initial value problem. Let us start with an overview of this formula. Consider a one dimensional quantum mechanic system with potential $V(q)$. The action functional on the space of paths is,
\begin{equation}
    S\left[\gamma\right] = \int_0^{T} \left(\frac{m}{2}\dot{q}(t)^2 - V\left(q(t)\right) \right) dt \label{eq:actionL}
\end{equation}
where $\gamma: \, [0,T] \to \mathbb{R}$, $t \mapsto q(t)$ is a path. We have adopted the usual notation $\dot{q} = dq/dt$. Let $\gamma_c$ be a critical point of this function, i.e. $\gamma_c(t) = q_c(t)$ solves the differential equation,
\begin{equation}
    m\ddot{q}_c(t) = -V'\left(q_c(t)\right) \label{eq:ELeq}
\end{equation}
with boundary conditions,
$$q_c(0) = q \quad \quad \dot{q}_c(0) = \frac{p}{m}$$
where $q$ and $p$ are parameters describing the initial position and momentum. Let $A$ denote the differential operator which appears in the second variation of the action functional,
\begin{equation*}
    \delta^2S\left[\gamma_c\right] = \int_0^T \delta q \, A \, \delta q \; dt
\end{equation*}
For the case of \eqref{eq:actionL} the operator $A$ is explicitly defined by,
\begin{equation}
    A=-\frac{d^2}{dt^2} - \frac{1}{m} V''\left(q_c\right)\label{eq:opA}
\end{equation}
and is equipped with Dirichlet boundary conditions. The Gelfand-Yaglom formula states, 
\begin{equation}
    \frac{\partial q_c(T)}{\partial p} = \frac{1}{2m}\det{}_\zeta(A) \label{eq:GY1}
\end{equation}
where $p$ is the parameter defined above. \\

\noindent To make sense of the above formula we must first define the $\zeta$-regularized determinant of an operator. Let $L$ be a differential operator with a discrete spectrum that is bounded from below. Remove zero eigenvalues and enumerate the spectrum, $\lambda_1 \leq \lambda_2 \leq \cdots \leq \lambda_n \leq \cdots$. Assuming the following series converges for sufficiently large $\Re(s)$, we define the $\zeta$-function of the operator $L$ to be, 
\begin{equation*}
    \zeta_L(s) = \sum_i \frac{1}{\lambda_i^s}
\end{equation*}
The $\zeta$-regularized determinant is defined as,
\begin{equation*}
    \det{}_\zeta (L) = e^{-\zeta_L'(0)}
\end{equation*}
where we must analytically continue the derivative of the $\zeta$-function of the operator to the point $s = 0$. Note that this is possible by Seeley's theorem, which states that the zeta-function of an elliptic operator extends to a meromorphic function in the complex plane and the origin is always a regular point. In the case of second order differential operators, see Takhtajan\cite{QMmath} and Kirsten\cite{FuncDet} for examples of computing $\zeta$-regularized determinants. 

\subsection{The Gelfand-Yaglom Formula in the Hamiltonian Formalism} \label{section1.2}
Now let us rewrite the system expressed in \eqref{eq:actionL} in terms of the Hamiltonian formalism. We let 
\begin{equation*}
    \tilde{\gamma}: \; [0,T] \to \mathbb{R}^2 = \left\{\left(p(t),q(t)\right)\right\}
\end{equation*} 
represent a path in the phase space. That is, we express a path in terms of its coordinates on the cotangent bundle of $\mathbb{R}$. The Hamilton-Jacobi action is,
\begin{equation}
    \tilde{S}[\tilde{\gamma}] = \int_0^T \Big(p(t) \dot{q}(t) - H\big(p(t),q(t)\big)\Big) \; dt \label{eq:actionH}
\end{equation}
where $H\big(p(t),q(t)\big)$ is the Hamiltonian corresponding to the system in equation \eqref{eq:actionL}. In other words, it is the Legendre transformation of the Lagrangian. Explicitly it is
\begin{equation}
    H\big(p(t),q(t)\big) = \frac{p(t)^2}{2m} + V\big(q(t)\big)\label{eq:coreH}
\end{equation}
Critical values of the action in \eqref{eq:actionH} are solutions to Hamiton's equations, denoted $\tilde{\gamma}_c(t) = \big(p_c(t), q_c(t)\big)$. 
\begin{equation*}
    \dot{q}_c(t) = \frac{1}{m} p_c(t), \quad \dot{p}_c(t) = -V' \big(q_c(t)\big)
\end{equation*}
Let us denote the critical values of $\tilde{S}$ with the notation, 
\begin{equation*}
    \tilde{S}_{\tilde{\gamma}_c}(q,q') = \tilde{S}[\tilde{\gamma}_c]
\end{equation*}
where $q = q_c(0)$ and $q' = q_c(T)$ define the starting and ending positions of the path $\tilde{\gamma}_c$. Clearly $\tilde{S}[\tilde{\gamma}_c] = S[\gamma_c]$. A quick computation yields,
\begin{equation*}
    \frac{\partial q_c(T)}{\partial p} = \left( \frac{\partial \tilde{S}_{\tilde{\gamma}_c}(q,q')}{\partial q \partial q'} \right)^{-1}
\end{equation*}
Inserting the above into \eqref{eq:GY1} gives a GY formula in terms of the phase space formalism,
\begin{equation}
    \left( \frac{\partial \tilde{S}_{\tilde{\gamma}_c}(q,q')}{\partial q \partial q'} \right)^{-1} = \frac{1}{2m} \det{}_\zeta \, A \label{eq:GY2}
\end{equation}
Where $A$ is again the operator given by equation \eqref{eq:opA}.

\subsection{An Action Functional with Lagrangian Boundary Conditions} \label{section1.3}
We now amend the action functional in \eqref{eq:actionH} by defining functions $f_1, \; f_2: \; \mathbb{R}^2 \to \mathbb{R}$. Explicitly, $f_1$ is a function of the initial position $q = q(0)$ and a parameter $b_1$, while $f_2$ is a function of the final position $q' = q(T)$ and a parameter $b_2$. These functions define Lagrangian boundary conditions on the phase space. The generalized action functional can be written as, 
\begin{equation}
    \Tilde{S}[\tilde{\gamma}] = \int_0^T \Big(p(t) \dot{q}(t) - H\big(p(t),q(t)\big)\Big) \; dt + f_1 (q, b_1) - f_2 ( q', b_2 ) \label{eq:actionH2}
\end{equation}
We assume $\mathbb{R}^2$ has the standard symplectic structure with coordinates $(p,q)$ and symplectic form $\omega = dp \wedge dq$. For now we will suppose $H\big(p(t),q(t)\big)$ is an arbitrary Hamiltonian that is at least twice differentiable in both variables. The critical points of the above generalized action functional are solutions to the boundary problem,
\begin{equation*}
    \dot{p}(t) = - \frac{\partial H}{\partial q(t)}\big(p(t),q(t)\big) \quad \quad \dot{q}(t) = \frac{\partial H}{\partial p(t)}\big(p(t),q(t)\big)
\end{equation*}
where
\begin{equation}
    p(0) = \frac{\partial f_1}{\partial q} \quad \quad p(T) = \frac{\partial f_2}{\partial q'} \label{eq:LBC}
\end{equation}
Thus critical points are flow lines of the Hamiltonian vector field generated by $H$, connecting the following two Lagrangian submanifolds 
\begin{equation*}
    \mathcal{L}_1 = \left\{ (p,q) \; | \; p = \frac{\partial f_1}{\partial q}(q,b_1) \right\}
\end{equation*} 
\begin{equation*}
    \mathcal{L}_2 = \left\{ (p,q') \; | \; p = \frac{\partial f_2}{\partial q'}(q',b_2) \right\}
\end{equation*} 
in time $T$. The second variation of the action in (8) near the classical trajectory defines a first order differential operator $\tilde{A}$,
\begin{equation*}
    \delta^2 \Tilde{S}[\gamma_c] = \int_0^T (\delta p, \; \delta q) \Tilde{A}\begin{pmatrix} \delta p \\ \delta q \end{pmatrix} \; dt
\end{equation*}
where $\tilde{A}$ is defined explicitly as, 
\begin{equation}
    \Tilde{A} = 
    \begin{pmatrix}
    -\frac{\partial^2 H}{\partial p^2}(p_c,q_c) & \frac{d}{dt} - \frac{\partial^2 H}{\partial q  \partial p}(p_c,q_c) \\
    -\frac{d}{dt} - \frac{\partial^2 H}{\partial q  \partial p}(p_c,q_c) & -\frac{\partial^2 H}{\partial q^2}(p_c,q_c)
    \end{pmatrix} \label{eq:optildeA}
\end{equation}
with boundary conditions, 
\begin{equation}
    x_2(0) = \frac{\partial^2 f_1}{\partial q^2}(q,b_1)  x_1(0) \quad \quad x_2(T) = \frac{\partial^2 f_2}{\partial q'^2}(q',b_2) x_1(T) \label{eq:LBCvec}
\end{equation}
where $\tilde{A}$ acts on the transpose of the vector $\left(x_1(t)\\x_2(t)\right)$. The boundary conditions above translate to the mixed boundary conditions when considering the second order differential operator $A$. Explicitly, if $A$ acts on the function $y(t)$ we can express the mixed boundary conditions as, 
\begin{equation}
    y'(0) = \frac{1}{m}\frac{\partial^2 f_1}{\partial q^2}(q,b_1) y(0) \quad \quad y'(T) = \frac{1}{m} \frac{\partial^2 f_2}{\partial q'^2}(q',b_2) y(T) \label{eq:LBCmixed}
\end{equation}

\noindent We desire a GY formula that is analogous to equation (7) which uses the operator $\tilde{A}$ and the action functional from equation (8), however the $\zeta$-regularized determinant of a first order operator depends on the choice of the spectral cut in the plane. 
\\

In this paper we give an alternative regularization of the operator $\tilde{A}$, which we refer to as the lattice-regularization. We compare this proposed regularization to the $\zeta$-regularization with the goal of showing that they are agreeable. Moreover, computation of the  lattice-regularization is simply a problem in matrix determinants. \\

The structure of the paper is as follows. In section \ref{section2}, we develop a discrete model of the system described above. From this discrete model we develop a discrete Gelfand Yaglom formula. At the end of section \ref{section2} we also develop a discrete model of the operator $A$ from \eqref{eq:opA} and compare it to the discrete operator form of $\tilde{A}$ from \eqref{eq:optildeA}. Section \ref{section3} is devoted to the convergence of these discrete operators in the continuum limit. After proving convergence we are able to define a lattice regularization for the determinants of $A$ and $\tilde{A}$. Lastly, in section \ref{section4} we prove a similar generalized Gelfand Yaglom formula for the operator $A$ using the $\zeta$-regularized determinant. From this, we are finally able to compare the $\zeta$-regularization to the lattice regularization for the determinant of the operator $A$.

\section{A Discretized Generalized Gelfand-Yaglom Formula} \label{section2}
\subsection{Discretized Quantum Mechanics System} \label{section2.1}
In this section, we will develop a discretized version of the usual quantum mechanics system. In this discrete setting, all determinants will be finite. This allows us to compute the following with ease: a generalized Gelfand-Yaglom formula in the Hamiltonian formalism, and a relationship between the determinants of the discretized versions of the operators $A$ and $\tilde{A}$. Later, in section \ref{section3}, we will consider how these results behave in the continuum limit, thus defining an alternative regularization for the determinants of $A$ and $\tilde{A}$.\\
\begin{figure}
\centering
\includegraphics{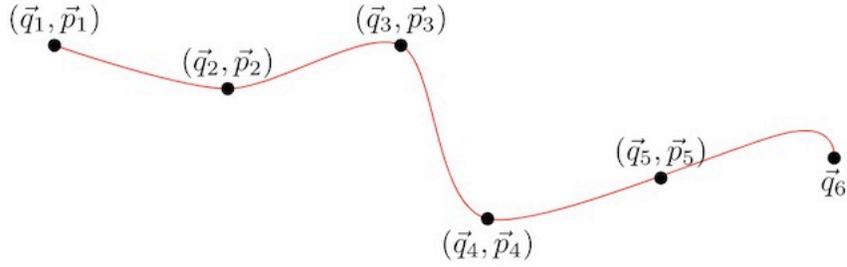}
\caption{\label{fig:fig1} A discretization of a path into $N = 6$ position vectors and $N-1 = 5$ momentum vectors.}
\end{figure}
First we discretize any given path, $\tilde{\gamma}\big(\q(t),\p(t)\big): \; [0,T] \to \mathbb{R}^{2n}$, into $N$ position and $N-1$ momentum vectors as shown in figure \ref{fig:fig1}, where $\q_i = \q\big((i-1)\cdot\epsilon\big)$ and $\p_i = \p\big((i-1)\cdot\epsilon\big)$ and $\epsilon = \frac{T}{N}$. From the above discretization and the action given in equation \eqref{eq:actionH2}, we propose the following discrete action functional
\begin{equation}
    \tilde{S}_d[\tilde{\gamma}_d] = \sum_{i=1}^{N-1} \p_i(\q_{i+1}-\q_i) - \sum_{i=1}^{N-1} H(\p_i,\q_i) - f_2(\q_N,\vec{b}_2) + f_1(\q_1,\vec{b}_1) \label{eq:actionH2D}
\end{equation}
where $f_1$ and $f_2$ are the same functions appearing in equation \eqref{eq:actionH2}. We will only consider discrete Hamiltonians that arise from twice differentiable continuous Hamiltonians. Note that $\q_1 = \q(0) = q$ and $\q_N = \q(T) = q'$ are exactly what appears in the continuous statement of the action. From the above we derive a discrete version of Hamilton's equations,
\begin{align}
    &\q_{i+1}-\q_i-\derh{\p}{i}(\p_i,\q_i) = 0 \quad i = 1, \dots, N-1 \label{eq:hamD1} \\
    &\p_i-\p_{i-1}+\derh{\q}{i}(\p_i,\q_i) =0 \quad i = 2, \dots, N-1 \label{eq:hamD2} 
\end{align}
and the boundary conditions,
\begin{align}
    &\derf{1}{\q}{1}=\p_1+\derh{\q}{1}(\p_1,\q_1) \label{eq:DBC1} \\
    &\derf{2}{\q}{N} = \p_{N-1} \label{eq:DBC2}
\end{align}
which agree with the conditions from \eqref{eq:LBC} in the continuum limit. The discretized path $\tilde{\gamma}_{d,c} = \{\p_1, \dots, \p_{N-1}, \q_1, \dots, \q_N\}$ that satisfies equations \eqref{eq:hamD1}-\eqref{eq:DBC2} will be known as the (discrete) critical point or classical path.\\

\noindent When we take the second variation of the discretized action functional we obtain a matrix operator which acts on the vector $\delta \tilde{\gamma}_{d,c}$ in the following manner, 
\begin{equation*}
    \delta^2 S_d[\tilde{\gamma}_{d,c}] = \delta \tilde{\gamma}_{d,c} \tilde{A}_N (\delta \tilde{\gamma}_{d,c})^T
\end{equation*}
The matrix $\tilde{A}_N$ is the discrete analog of the Hamilton-Jacobi operator $\tilde{A}$ with $N$ time intervals. To define $\tilde{A}_N$ explicitly, we should first note that it has a block form, 
\begin{equation}
    \tilde{A}_N = 
    \begin{pmatrix}
    D_1 & D_2 \\
    D_3 & D_4
    \end{pmatrix} \label{eq:blocktildeAN}
\end{equation}
In the one-dimensional case the block above can be written explicitly as, 
\begin{equation*}
    (D_1)_{ij} = 
    \begin{cases}
    -\derderh{p}{i}{p}{i} & \text{ if } i = j \\
    0 & \text{ if } i \neq j
    \end{cases}
\end{equation*}
\begin{equation*}
    (D_2)_{ij} = (D_3)_{ji} = 
    \begin{cases}
    -1- \derderh{p}{i}{q}{i} & \text{ if } i = j \\
    1 & \text{ if } i+1 = j \\
    0 & \text{ otherwise }
    \end{cases}
\end{equation*}
\begin{equation*}
    (D_4)_{ij} = 
    \begin{cases}
    \derderf{1}{q}{1}{q}{1} - \derderh{q}{1}{q}{1} & \text{ if } i = j = 1 \\
    -\derderh{q}{i}{q}{i} & \text{ if } 2 \leq i = j \leq N-1 \\
    -\derderf{2}{q}{N}{q}{N} & \text{ if } i = j = N \\
    0 & \text{ if } i \neq j
    \end{cases}
\end{equation*}
where all derivatives are taken at the critical point. These matrices are immediately generalized to the $n$-dimensional case, where partial derivatives become $n \times n$ matrices of partial derivative and any constant is multiplied by the $n \times n$ identity matrix. 

\subsection{Generalized Gelfand-Yaglom Formula} \label{section2.2}
We will restrict our Hamiltonians to those that satisfy, 
\begin{equation*}
    \det \left(\mathbb{I} + \derderh{\vec{p}}{i}{\vec{q}}{i} \right) \neq 0
\end{equation*}
and
\begin{equation*}
     \det \left( \derderh{\p}{i}{\p}{i} \right) \neq 0
\end{equation*}
for all $i = 1, \dots, N-1$. From the above set up we derive a generalized GY formula. Note all determinant below are determinants of finite matrices. 
\begin{theorem}
The discrete action functional defined by \eqref{eq:actionH2D} satisfies the generalized Gelfand-Yaglom formula
\begin{equation}
    \det\left(\frac{\partial^2 \tilde{S}_{d,\tilde{\gamma}_c}(b_1,b_2)}{\partial \vec{b}_1 \partial \vec{b}_2}\right) = \prod_{i=1}^{N-1} \det\left(-\derderh{\p}{i}{\q}{i}-\mathbb{I}\right) \frac{\det\left( \derderf{1}{\q}{1}{\vec{b}}{1} \right)\det\left(\derderf{2}{\q}{N}{\vec{b}}{2}\right)}{\det\tilde{A}_N}\label{eq:DGYgen}
\end{equation}
where $\tilde{S}_d[\tilde{\gamma}_c] = \tilde{S}_{d,\tilde{\gamma}_c}(b_1,b_2)$ is the action at the classical path and $\tilde{A}_N$ is the Hamilton-Jacobi matrix operator. 
\end{theorem}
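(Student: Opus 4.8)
The plan is to reduce the mixed second derivative of the on-shell action to a single cofactor of $\tilde A_N$, via the envelope theorem and implicit differentiation of the critical-point equations, and then to evaluate that cofactor using the bidiagonal block structure of $\tilde A_N$.

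First I would apply the envelope theorem. Since the classical path satisfies $\nabla \tilde S_d = 0$ in the dynamical variables and $\vec b_2$ enters \eqref{eq:actionH2D} only through $-f_2(\q_N,\vec b_2)$, the on-shell action obeys $\partial \tilde S_{d,\tilde\gamma_c}/\partial\vec b_2 = -\partial f_2/\partial\vec b_2$ evaluated at $\q_N$. Differentiating again in $\vec b_1$ and using that only $\q_N$ carries $\vec b_1$-dependence gives
\[ \frac{\partial^2 \tilde S_{d,\tilde\gamma_c}}{\partial\vec b_1\partial\vec b_2} = -\Big(\frac{\partial \q_N}{\partial\vec b_1}\Big)^{\!T}\derderf{2}{\q}{N}{\vec b}{2}, \]
so that the left side of \eqref{eq:DGYgen} equals $(-1)^n\det(\partial\q_N/\partial\vec b_1)\,\det(\partial^2 f_2/\partial\q_N\partial\vec b_2)$. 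The task is then to compute the endpoint Jacobian $\partial\q_N/\partial\vec b_1$.

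Next I would obtain this Jacobian from the defining relations \eqref{eq:hamD1}--\eqref{eq:DBC2}. The point is that these are exactly the equations $\nabla\tilde S_d=0$ and that $\tilde A_N$ of \eqref{eq:blocktildeAN} is precisely the Hessian of $\tilde S_d$ in $(\p_1,\dots,\p_{N-1},\q_1,\dots,\q_N)$, as one checks block by block against the formulas for $D_1,\dots,D_4$. Differentiating $\nabla\tilde S_d(\tilde\gamma_c(\vec b_1,\vec b_2),\vec b_1,\vec b_2)=0$ in $\vec b_1$ then yields the linear system $\tilde A_N\,\partial_{\vec b_1}\tilde\gamma_c = -\,\derderf{1}{\q}{1}{\vec b}{1}\,e_{\q_1}$, whose inhomogeneity sits only in the $\q_1$ block because $\vec b_1$ enters the action only through $f_1(\q_1,\vec b_1)$. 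To extract $\det(\partial\q_N/\partial\vec b_1)$ cleanly in the vector-valued case, avoiding block Cramer's rule, I would introduce the bordered matrix
\[ \mathcal M = \begin{pmatrix}\tilde A_N & -e_{\q_1}\,\derderf{1}{\q}{1}{\vec b}{1}\\ e_{\q_N}^{T} & 0\end{pmatrix} \]
and compute $\det\mathcal M$ two ways: the Schur complement in $\tilde A_N$ gives $\pm\det\tilde A_N\cdot\det(\partial\q_N/\partial\vec b_1)$, while a double Laplace expansion along the bordering block-row and block-column collapses $\mathcal M$ onto the minor of $\tilde A_N$ with block-row $\q_1$ and block-column $\q_N$ deleted, times $\det(\partial^2 f_1/\partial\q_1\partial\vec b_1)$.

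The heart of the argument, and the step I expect to be the main obstacle, is evaluating that minor. After deleting the $\q_1$ row and $\q_N$ column, the surviving matrix has block form $\left(\begin{smallmatrix} D_1 & \hat D_2\\ \hat D_3 & \hat D_4\end{smallmatrix}\right)$ in which $\hat D_3$ is block upper-bidiagonal with identity blocks on its diagonal (hence $\det\hat D_3=1$) and $\hat D_4$ is strictly block-upper-triangular. Swapping block-rows and taking the Schur complement of $\hat D_3$, the complement $\hat D_2-D_1\hat D_3^{-1}\hat D_4$ is block-upper-triangular with diagonal blocks $-\mathbb I-\partial^2 H/\partial\p_i\partial\q_i$, because the correction $D_1\hat D_3^{-1}\hat D_4$ is strictly upper-triangular and never reaches the diagonal. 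Its determinant is thus $\prod_{i=1}^{N-1}\det(-\derderh{\p}{i}{\q}{i}-\mathbb I)$, exactly the product in \eqref{eq:DGYgen}; the hypotheses $\det(\mathbb I+\partial^2 H/\partial\p\partial\q)\neq0$ and $\det(\partial^2 H/\partial\p^2)\neq0$ are what keep the relevant blocks invertible. What remains is purely sign bookkeeping: collecting the $(-1)^n$ from the envelope step, the signs from the block-row swap and the two Laplace expansions, and the parity of the $(\q_1,\q_N)$ cofactor position, then checking they cancel so that \eqref{eq:DGYgen} emerges with $\det\tilde A_N$ in the denominator. I would confirm the sign pattern directly for $N=2$ and $N=3$ before asserting it in general.
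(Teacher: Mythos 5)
Your proposal is correct, but it reaches \eqref{eq:DGYgen} by a genuinely different route than the paper. The paper opens with the same envelope argument (applied at $\vec{b}_1$ rather than $\vec{b}_2$, an immaterial difference), but then computes the endpoint Jacobian dynamically: it differentiates the discrete Hamilton equations \eqref{eq:hamD1}--\eqref{eq:DBC2} in $\vec{b}_2$ and propagates the result with $2n\times 2n$ transfer matrices $U_i$ and boundary blocks $W_1,W_2$, arriving at $\derderf{2}{\q}{N}{\vec{b}}{2}=\bigl(W_2^TU_{N-1}\cdots U_2W_1\bigr)\derq{1}{2}$; identifying $\det\bigl(W_2^TU_{N-1}\cdots U_2W_1\bigr)$ with $\det\tilde{A}_N$ then requires a separate technical result (Lemma II.2, proved in Appendix B via the Schur complement $\det\tilde{A}_N=\det D_1\det(D_4-D_3D_1^{-1}D_2)$ and a block-tridiagonal determinant formula). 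Your route---recognizing $\tilde{A}_N$ as the Hessian of $\tilde{S}_d$ in the dynamical variables, solving $\tilde{A}_N\,\partial_{\vec{b}_1}\tilde{\gamma}_c=-\derderf{1}{\q}{1}{\vec{b}}{1}e_{\q_1}$, and converting the $(\q_N,\q_1)$ block of $\tilde{A}_N^{-1}$ into a complementary minor via the bordered matrix (Jacobi's identity)---bypasses both the transfer-matrix algebra and Lemma II.2 entirely, and your triangularity evaluation of the minor makes transparent where the product $\prod_{i=1}^{N-1}\det\bigl(-\derderh{\p}{i}{\q}{i}-\mathbb{I}\bigr)$ comes from: it is the determinant of the bidiagonal block $\hat{D}_2$, with $\det\hat{D}_3=1$ and $\hat{D}_4$ strictly upper triangular contributing nothing (indeed your minor step does not even need $\det\bigl(\derderh{\p}{i}{\p}{i}\bigr)\neq 0$, which the paper's Appendix B does use). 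What the paper's route buys in exchange is interpretive: the $U_i$ are the discrete Jacobi equation, i.e.\ the shooting-solution picture behind the classical Gelfand--Yaglom formula, and the intermediate identities \eqref{eq:detSwithU} and \eqref{eq:dettildeAWU} are of independent use. Your one deferred item, the sign, does close in general, not just for small $N$: Jacobi's identity gives $\det\bigl((\tilde{A}_N^{-1})[I,J]\bigr)=(-1)^{\sigma(I)+\sigma(J)}\det\bigl(\tilde{A}_N[J^c,I^c]\bigr)/\det\tilde{A}_N$ with $\sigma(I)+\sigma(J)\equiv (N-1)n \pmod 2$ for the $(\q_N,\q_1)$ blocks, the block-row swap in the minor contributes another $(-1)^{(N-1)n}$, and the factor $(-1)^n$ from the envelope step cancels the $(-1)^n$ from $\partial\q_N/\partial\vec{b}_1=-(\tilde{A}_N^{-1})_{\q_N\q_1}\derderf{1}{\q}{1}{\vec{b}}{1}$, so the total sign is $+1$ for all $N$ and $n$, exactly as \eqref{eq:DGYgen} requires. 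The only hypotheses you should state explicitly are $\det\tilde{A}_N\neq 0$ (which the formula presupposes anyway) and the resulting smooth dependence of the critical point on $(\vec{b}_1,\vec{b}_2)$ via the implicit function theorem.
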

\begin{proof}
All matrices used below are explicitly defined in Appendix A. Throughout the proof we assume all $\vec{p}_i$ and $\vec{q}_i$ satisfy equations \eqref{eq:hamD1}-\eqref{eq:DBC2}. To begin we directly compute the derivative of the action at the classical path with respect to $\vec{b_1}$,
\begin{align*}
    \frac{\partial \tilde{S}_{d,\tilde{\gamma}_c}(b_1,b_2)}{\partial \vec{b}_1} &= \sum_{i=1}^{N-1} \derp{i}{1} \left(\q_{i+1} - \q_i\right) + \sum_{i=1}^{N-1} \p_i\left(\derq{i+1}{1} - \derq{i}{1}\right)-\sum_{i=1}^{N-1} \derh{\p}{i}(\p_i,\q_i)\derp{i}{1} \\
    & \quad\quad + \sum_{i=1}^{N-1} \derh{\q}{i}(\p_1,\q_i) \derq{i}{1} - \derf{2}{\q}{N}(\q_N,\vec{b}_2)\derq{N}{1} + \derf{1}{\vec{b}}{1} + \derf{1}{\q}{1}(\q_1,\vec{b}_1)\derq{1}{1}
\end{align*}
Once we realize that the above derivative is taken at the classical path, many terms cancel. The first sum cancels with the third sum by equation \eqref{eq:hamD1} and if we rearrange the second sum to be, 
\begin{equation*}
    \sum_{i=1}^{N-1} \p_i\left(\derq{i+1}{1} - \derq{i}{1}\right) = -\derq{1}{1}\p_1 - \sum_{i=2}^{N-1} \derq{i}{1} \left(\p_i-\p_{i-1}\right)+\derq{N}{1}\p_N
\end{equation*}
we see the above cancels out many of the other terms by \eqref{eq:hamD1}, \eqref{eq:DBC1}, and \eqref{eq:DBC2} and so we obtain, 
\begin{equation}
    \frac{\partial \tilde{S}_{d,\tilde{\gamma}_c}(b_1,b_2)}{\partial \vec{b}_1}
    = \derf{1}{\vec{b}}{1} \label{eq:dSdb1}
\end{equation}
Next taking the derivative with respect to $\vec{b}_2$ yields, 
\begin{equation}
    \frac{\partial^2 \tilde{S}_{d,\tilde{\gamma}_c}(b_1,b_2)}{\partial \vec{b}_1 \partial \vec{b}_2} = \left( \derderf{1}{\q}{1}{\vec{b}}{1} \right)^T \left( \derq{1}{2} \right) \label{eq:dSdb1db2}
\end{equation}
Note that the right hand side of equation \eqref{eq:dSdb1} is truthfully, 
\begin{equation*}
    \derf{1}{\vec{b}}{1} = \derf{1}{\vec{b}}{1}(\q,\vec{b}_1) \bigg\rvert_{\q = \q_1}
\end{equation*}
and thus it does not concern the dependence of $\q_1$ on $\vec{b}_1$. This will be the case whenever we write derivatives of $f_1$ or $f_2$ with respect to $\vec{b}_1$ or $\vec{b}_2$.\\

\noindent We would now like to replace $\partial \vec{q}_1/\partial \vec{b}_2$ in equation \eqref{eq:dSdb1db2}. To do so we will take the derivatives of equations \eqref{eq:hamD1}-\eqref{eq:DBC2} with respect to the Lagrangian parameter $\vec{b}_2$,
\begin{align}
    &\derq{i+1}{2} - \derq{i}{2} - \derderh{\p}{i}{\p}{i}\derp{i}{2} + \derderh{\p}{i}{\q}{i}\derq{i}{2} = 0 \label{eq:hamD1der}\\
    &\derp{i}{2} - \derp{i-1}{2} + \derderh{\q}{i}{\q}{i}\derq{i}{2} + \derderh{\p}{i}{\q}{i}\derp{i}{2} = 0 \label{eq:hamD2der}\\
    &\derderf{1}{\q}{1}{\q}{1} \derq{1}{2} = \derp{1}{2} + \derderh{\q}{1}{\q}{1}\derq{1}{2} + \derderh{\q}{1}{\p}{1}\derp{1}{2} \label{eq:DBC1der}\\
    &\derderf{2}{\q}{N}{\vec{b}}{2} + \derderf{2}{\q}{N}{\q}{N}\derq{N}{2} = \derp{N-1}{2} \label{eq:DBC2der}
\end{align}
First it will be useful to write equations \eqref{eq:hamD1der} and \eqref{eq:hamD2der} as the following recursive system of equations, 
\begin{equation}
    \begin{pmatrix}\derq{i+1}{2} \\[0.5em] \derq{i}{2} \end{pmatrix} = U_i \begin{pmatrix}\derq{i}{2} \\[0.5em] \derq{i-1}{2} \end{pmatrix} \label{eq:Usystem}
\end{equation}
where $U_i$ is the $2n \times 2n$ block matrix,
\begin{equation*}
    U_i = \begin{pmatrix}
    \alpha_i & \beta_i \\
    \mathbb{I} & 0 
    \end{pmatrix}
\end{equation*}
and the matrices $\alpha_i$ and $\beta_i$ are given by the equations, 
\begin{align*}
    \alpha_i &= \left(\mathbb{I} + \derderh{\p}{i}{\q}{i}\right) - \derderh{\p}{i}{\p}{i}\left(\mathbb{I}+\derderh{\p}{i}{\q}{i}\right)^{-1}\derderh{\q}{i}{\q}{i} + \derderh{\p}{i}{\p}{i}\left(\mathbb{I}+\derderh{\p}{i}{\q}{i}\right)^{-1}\left(\derderh{\p}{i-1}{\p}{i-1}\right)^{-1}\\[0.5em] 
    \beta_i &= -\derderh{\p}{i}{\p}{i}\left(\mathbb{I}+\derderh{\p}{i}{\q}{i}\right)^{-1}\left(\derderh{\p}{i-1}{\p}{i-1}\right)^{-1}\left(\mathbb{I}+\derderh{\p}{i-1}{\q}{i-1}\right)
\end{align*}
Note that there are no derivatives of $\p_i$ with respect to $\vec{b}_2$ in equation \eqref{eq:Usystem}, as we can substitute equation \eqref{eq:hamD1der} in equation \eqref{eq:hamD2der} to eliminate it. Next we define the vector $W_1$, the initial vector of the recursive system, by, 
\begin{equation*}
    W_1 \derq{1}{2} = \begin{pmatrix}
    \derq{2}{2} \\[0.5em] \derq{1}{2} \end{pmatrix}
\end{equation*}
and so explicitly we have,
\begin{equation*}
    W_1 = \begin{pmatrix} \left(\mathbb{I}+\derderh{\p}{1}{\q}{1}\right) + \derderh{\p}{1}{\p}{1}\left(\mathbb{I}+\derderh{\p}{1}{\q}{1}\right)^{-1}\left(\derderf{1}{\q}{1}{\q}{1}-\derderh{\q}{1}{\q}{1}\right) \\[0.5em] \mathbb{I}
    \end{pmatrix}
\end{equation*}
Combining $W_1$ with the system in equation \eqref{eq:Usystem} we have the useful relation,
\begin{equation*}
    \begin{pmatrix} \derq{N}{2} \\[0.5em] \derq{N-1}{2} \end{pmatrix} = U_{N-1} \cdots U_2W_1\derq{1}{2}
\end{equation*}
Next we rewrite equation \eqref{eq:DBC2der} by rearranging the terms and writing $\partial \p_{N-1} / \partial \vec{b}_2$ in terms of $\partial \q_{N-1} / \partial \vec{b}_2$ and $\partial \q_{N-2} / \partial \vec{b}_2$,
\begin{equation*}
    \derderf{2}{\q}{N}{\vec{b}}{2} = W_2^T \begin{pmatrix} \derq{N}{2} \\[0.5em] \derq{N-1}{2} \end{pmatrix}
\end{equation*}
Putting this all together we get the following convenient way of expressing equation \eqref{eq:DBC2der},
\begin{equation}
    \derderf{2}{\q}{N}{\vec{b}}{2} = \left(W_2^TU_{N-1}\cdots U_2W_1\right)\derq{1}{2} \label{eq:df2withU}
\end{equation}
Observe in the one dimensional case ($n=1$), the matrix product in \eqref{eq:df2withU} is a scalar. Generally, this matrix product gives an $n \times n$ matrix. Plugging equation \eqref{eq:df2withU} back into equation \eqref{eq:dSdb1db2} yields,
\begin{equation*}
    \frac{\partial^2 \tilde{S}_{d,\tilde{\gamma}_c}(b_1,b_2)}{\partial \vec{b}_1 \partial \vec{b}_2} = \left( \derderf{1}{\q}{1}{\vec{b}}{1} \right)^T \left(W_2^TU_{N-1}\cdots U_2W_1\right)^{-1}\left(\derderf{2}{\q}{N}{\vec{b}}{2}\right)
\end{equation*}
and taking the determinant gives, 
\begin{equation}
    \det\left(\frac{\partial^2 \tilde{S}_{d,\tilde{\gamma}_c}(b_1,b_2)}{\partial \vec{b}_1 \partial \vec{b}_2}\right) = \frac{\det\left( \derderf{1}{\p}{1}{\vec{b}}{1} \right)\det\left(\derderf{2}{\p}{N}{\vec{b}}{2}\right)}{\det\left(W_2^TU_{N-1}\cdots U_2W_1\right)} \label{eq:detSwithU}
\end{equation}
Now let's write the denominator of \eqref{eq:detSwithU} in terms of the determinant of the Hamilton-Jacobi matrix operator, $\tilde{A}_N$. To do so we will need the following technical lemma,

\begin{lemma}
For the $(2Nn-n) \times (2Nn-n)$ Hamilton-Jacobi matrix $\tilde{A}_N$,
\begin{equation}
    \det{\tilde{A}_N} = (-1)^{Nn}\left[\prod_{i=1}^{N-1}\det\left(-\derderh{\p}{i}{\p}{i}\right)\right]\det\left(V_2^TT_{N-1}\cdots T_2V_1\right) \det\left(B_{N-1}\cdots B_1\right)  \label{eq:lemmaformula}
\end{equation}
Where we define the block matrices,
\begin{align*}
    T_i &= \begin{pmatrix} -B_i^{-1}E_i & -B_i^{-1}C_{i-1} \\ \mathbb{I} & 0 \end{pmatrix}\\[0.5em]
    V_1 &= \begin{pmatrix} -B_1^{-1}E_1 \\ \mathbb{I} \end{pmatrix} \\[0.5em]
    V_2 &= \begin{pmatrix} -E_N \\ -C_{N-1} \end{pmatrix}
\end{align*}
and the $m \times m$ matrices, 
\begin{align*}
    E_i &= \begin{cases} \derderf{1}{\q}{1}{\q}{1} - \derderh{\q}{1}{\q}{1} + \left(\mathbb{I}+\derderh{\p}{1}{\q}{1}\right)\left(\derderh{\q}{1}{\q}{1}\right)^{-1}\left(\mathbb{I}+\derderh{\p}{1}{\q}{1}\right) & \text{ if } i = 1 \\[0.5em] 
    -\derderh{\q}{i}{\q}{i} + \left(\derderh{\p}{i-1}{\p}{i-1}\right)^{-1}+\left(\mathbb{I}+\derderh{\p}{i}{\q}{i}\right)\left(\derderh{\p}{i}{\p}{i}\right)^{-1}\left(\mathbb{I}+\derderh{\p}{i}{\q}{i}\right) & \text{ if } 2 \leq i \leq N-1 \\[0.5em]
    -\derderf{2}{\q}{N}{\q}{N}+\left(\derderh{\p}{N-1}{\p}{N-1}\right)^{-1} & \text{ if } i = N
    \end{cases}\\[0.5em]
    B_i &= \left(\mathbb{I}+\derderh{\p}{i}{\q}{i}\right)\left(\derderh{\p}{i}{\p}{i}\right)^{-1} \\[0.5em]
    C_i &= \left(\derderh{\p}{i}{\p}{i}\right)^{-1}\left(\mathbb{I} + \derderh{\p}{i}{\q}{i}\right)
\end{align*}
\end{lemma}

\noindent The above lemma is proved in Appendix B. An easy computation reveals the relationships,
\begin{align*}
    V_i &= \begin{pmatrix}-\mathbb{I} & 0 \\ 0 & \mathbb{I} \end{pmatrix}W_i \\[0.5em]
    T_i &= \begin{pmatrix}-\mathbb{I} & 0 \\ 0 & \mathbb{I} \end{pmatrix}U_i\begin{pmatrix}\mathbb{I} & 0 \\ 0 & -\mathbb{I} \end{pmatrix}
\end{align*}
and so we can rewrite equation \eqref{eq:lemmaformula} in terms of the $W$ and $U$ matrices,
\begin{equation}
    \det{\tilde{A}_N} = \left[\prod_{i=1}^{N-1} \det\left(-\derderh{\p}{i}{\p}{i}\right)\det B_i \right] \det\left(W_2^TU_{N-1}\cdots U_2W_1\right) \label{eq:dettildeAWU}
\end{equation}
Using the definition of the $B_i$ matrices and plugging the above into equation (24) we obtain,
\begin{equation*}
    \det\left(\frac{\partial^2 \tilde{S}_{\tilde{\gamma}_c}(b_1,b_2)}{\partial \vec{b}_1 \partial \vec{b}_2}\right) = \left[\prod_{i=1}^{N-1} \det\left(-\mathbb{I} - \derderh{\p}{i}{\q}{i}\right)\right] \frac{\det\left( \derderf{1}{\q}{1}{\vec{b}}{1} \right)\det\left(\derderf{2}{\q}{N}{\vec{b}}{2}\right)}{\det\tilde{A}_N}
\end{equation*}
which is precisely the statement from Theorem II.1.
\end{proof}

\noindent We will be particularly interested in the case where $H(p_i,q_i) = \frac{1}{2m}p_i^2 + V(q_i)$ where the statement from theorem II.1 simplifies to,
\begin{equation*}
    \det\left(\frac{\partial^2 \tilde{S}_{\tilde{\gamma}_c}(b_1,b_2)}{\partial \vec{b}_1 \partial \vec{b}_2}\right) = (-1)^{n(N-1)} \frac{\det\left( \derderf{1}{\q}{1}{\vec{b}}{1} \right)\det\left(\derderf{2}{\q}{N}{\vec{b}}{2}\right)}{\det\tilde{A}_N} 
\end{equation*}
Morevoer, we will now assume $N$ is odd, so the above formula becomes
\begin{equation}
    \det\left(\frac{\partial^2 \tilde{S}_{\tilde{\gamma}_c}(b_1,b_2)}{\partial \vec{b}_1 \partial \vec{b}_2}\right) = \frac{\det\left( \derderf{1}{\q}{1}{\vec{b}}{1} \right)\det\left(\derderf{2}{\q}{N}{\vec{b}}{2}\right)}{\det\tilde{A}_N} \label{eq:GYspecialH}
\end{equation}

\subsection{A Discrete Version of the Operator $A$} \label{section2.3}
Now we will consider the operator $A$ with boundary conditions given by equation \eqref{eq:LBCmixed}. We will define a discretized version of $A$ and compare the determinant of this (finite) operator to the determinant of $\tilde{A}_N$. After proving convergence of these operators in the continuum limit, we will be able to compare the limits of the discrete determinants to the regularized determinants.\\

\noindent For the one dimensional case, we define the discretized version of the operator $A$ as,
\begin{equation}
    (A_N)_{jk} = \begin{cases}-1 & \text{if } j=k+1 \text{ or } k=j+1\\\frac{a_1}{m}+1-\frac{1}{m}V''_j & \text{if } j=k=1\\2-\frac{1}{m}V''_j & \text{if } j = k \text{ and } 2 \leq j \leq N-1\\-\frac{a_2}{m}+1 & \text{if } i=k=1 \\ 0 & \text{otherwise}\end{cases} \label{eq:defAN}
\end{equation}
where $V''_j = V''(q_j)$ and 
\begin{equation*}
    a_1 = \frac{\partial^2 f_1}{\partial q^2}(q,b_1) \quad \quad a_2= \frac{\partial^2 f_2}{\partial q'^2}(q',b_2)
\end{equation*}
Note that this is the usual definition in the case where $\epsilon = 1$. In section 3 we will expand this definition for arbitrary $\epsilon$ in order to consider the convergence of the operator (and its determinant).
\begin{theorem}
Consider the discrete operators $A_N$ and $\tilde{A}_N$, along with the corresponding Hamiltonian is $H(p_i,q_i) = \frac{1}{2m}p_i^2 + V(q_i)$. Their determinants are related by the following formula for all $N \geq 2$.
\begin{equation}
    \det \tilde{A}_N = (-1)^{N-1} m \det A_N \label{eq:relateANtildeAN}
\end{equation}
\end{theorem}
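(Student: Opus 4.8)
The plan is to exploit the block structure of $\tilde{A}_N$ from equation \eqref{eq:blocktildeAN} together with the Schur complement formula for the determinant. Specializing to $H(p_i,q_i) = \frac{1}{2m}p_i^2 + V(q_i)$ kills the mixed second derivatives, $\partial^2 H/\partial p_i \partial q_i = 0$, and makes the pure momentum second derivative constant, $\partial^2 H/\partial p_i^2 = 1/m$. Reading off the blocks from Section \ref{section2.1}, this means $D_1 = -\frac{1}{m}\mathbb{I}_{N-1}$ is a scalar multiple of the identity (hence invertible), $D_3 = D_2^T$, and $D_4$ is the diagonal $N\times N$ matrix with entries $a_1 - V''_1,\ -V''_2,\ \ldots,\ -V''_{N-1},\ -a_2$ down the diagonal.

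Since $D_1$ is invertible, I would apply the Schur complement identity
\[
\det \tilde{A}_N = \det(D_1)\,\det\!\big(D_4 - D_3 D_1^{-1} D_2\big).
\]
The first factor is immediate: $\det(D_1) = (-1/m)^{N-1}$. For the Schur complement, $D_1^{-1} = -m\,\mathbb{I}$ gives $D_4 - D_3 D_1^{-1} D_2 = D_4 + m\,D_2^T D_2$, so everything reduces to identifying the matrix $D_2^T D_2$. With the special Hamiltonian, $D_2$ is the $(N-1)\times N$ forward-difference matrix carrying $-1$ on the diagonal and $+1$ on the superdiagonal, so $D_2^T D_2$ is the $N\times N$ second-difference (discrete Neumann Laplacian) matrix: diagonal $(1,2,\ldots,2,1)$ with $-1$ on both off-diagonals.

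Adding $m\,D_2^T D_2$ to $D_4$ then produces a tridiagonal matrix whose diagonal reads $a_1 - V''_1 + m,\ -V''_2 + 2m,\ \ldots,\ -V''_{N-1} + 2m,\ -a_2 + m$ with $-m$ on the off-diagonals, which is exactly $m A_N$ entry by entry against the definition \eqref{eq:defAN}. Hence $\det\!\big(D_4 + m D_2^T D_2\big) = \det(m A_N) = m^N \det A_N$, and combining with $\det(D_1) = (-1/m)^{N-1}$ yields the claimed $\det \tilde{A}_N = (-1)^{N-1} m \det A_N$.

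The computation is essentially routine; the only place that requires care is the boundary bookkeeping in $D_2^T D_2$, where the first and last diagonal entries come out as $1$ rather than $2$ because the first and last columns of $D_2$ each have only a single nonzero entry. These "Neumann" corrections are precisely what convert the bare $D_4$ into the boundary-adjusted diagonal entries $a_1/m + 1 - V''_1/m$ and $-a_2/m + 1$ appearing in $A_N$, so I would verify $D_4 + m D_2^T D_2 = m A_N$ by treating the cases $j=1$, $2\le j\le N-1$, and $j=N$ separately. Finally I would confirm the power of $m$ balances: the factor $(1/m)^{N-1}$ from $\det D_1$ against $m^N$ from $\det(m A_N)$ leaves a single surviving factor of $m$, matching the statement for every $N \geq 2$.
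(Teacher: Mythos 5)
Your proposal is correct and is essentially the paper's own argument: the paper's entire proof consists of the Schur-complement factorization $\det \tilde{A}_N = \det(D_1)\det\bigl(D_4 - D_3 D_1^{-1} D_2\bigr)$ together with the observation that $D_4 - D_3 D_1^{-1} D_2 = m\, A_N$, which you verify explicitly. The only difference is that you fill in the entry-by-entry bookkeeping (in particular the corner entries of $D_2^T D_2$ that produce the boundary terms $a_1/m + 1 - V''_1/m$ and $-a_2/m + 1$) that the paper states without computation.
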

\begin{proof}
The result follows immediately from the fact that, $\det \tilde{A}_N = \det D_1 \det (D_4 - D_3D_1^{-1}D_2)$ and the observation that, $D_4-D_3D_1^{-1}D_2 = m \cdot A_N$ for all $N$ and all twice differentiable function $V(q_i)$.
\end{proof}
An immediate consequence of Theorem II.3 is the following formula,
\begin{corollary}
For the discrete operator $A_N$ with associated Hamiltonian $H(p_i,q_i) = \frac{1}{2m}p_i^2 + V(q_i)$ and mixed boundary conditions from \eqref{eq:LBCmixed}, the following discrete generalized Gelfand-Yaglom formula holds
\begin{equation*}
    \det(A_N) = \frac{1}{m}\frac{\frac{\partial^2 f_1}{\partial b_1 \partial q_1}\frac{\partial^2 f_2}{\partial b_2 \partial q_N}}{\frac{\partial^2 S_{\gamma_c}(b_1,b_2)}{\partial b_1 \partial b_2}}
\end{equation*}
\end{corollary}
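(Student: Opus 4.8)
The plan is to obtain the corollary as a direct algebraic consequence of Theorem II.1 and Theorem II.3, with essentially no new computation required. First I would specialize the Gelfand-Yaglom formula of Theorem II.1 to the case at hand. For the Hamiltonian $H = p^2/2m + V(q)$ one has $\partial^2 H/\partial p\,\partial q = 0$ and $\partial^2 H/\partial p^2 = 1/m \neq 0$, so the regularity hypotheses imposed in Section II.2 are automatically satisfied and the simplified form of the formula applies. Taking the one-dimensional case $n = 1$, every determinant of a $1 \times 1$ block collapses to a scalar and the sign factor $(-1)^{n(N-1)}$ becomes $(-1)^{N-1}$, giving
$$\frac{\partial^2 \tilde{S}_{\tilde{\gamma}_c}(b_1,b_2)}{\partial b_1 \partial b_2} = (-1)^{N-1}\,\frac{\frac{\partial^2 f_1}{\partial q_1 \partial b_1}\,\frac{\partial^2 f_2}{\partial q_N \partial b_2}}{\det \tilde{A}_N}.$$

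Next I would substitute the identity $\det \tilde{A}_N = (-1)^{N-1} m \det A_N$ from Theorem II.3 into the denominator. The key observation is that the factor $(-1)^{N-1}$ sitting in the numerator and the factor $(-1)^{N-1}$ produced by Theorem II.3 cancel exactly, and they do so for every $N \geq 2$ irrespective of parity. Consequently the parity assumption used only to tidy the displayed formula \eqref{eq:GYspecialH} in the main text is not needed here; the cancellation is automatic. After it, one is left with
$$\frac{\partial^2 \tilde{S}_{\tilde{\gamma}_c}(b_1,b_2)}{\partial b_1 \partial b_2} = \frac{1}{m}\,\frac{\frac{\partial^2 f_1}{\partial q_1 \partial b_1}\,\frac{\partial^2 f_2}{\partial q_N \partial b_2}}{\det A_N}.$$

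The remaining work is purely cosmetic. Solving the last display for $\det A_N$ reproduces the claimed formula once I make two standard identifications: that $\tilde{S}_{\tilde{\gamma}_c} = S_{\gamma_c}$, since the classical action agrees whether computed in the Hamiltonian or Lagrangian picture (as noted in Section I.2), and that $\partial^2 f_j/\partial q\,\partial b = \partial^2 f_j/\partial b\,\partial q$ by equality of mixed partials for the $C^2$ data $f_j$, which lets me match the order of differentiation written in the statement. I expect no genuine obstacle: the entire content lives in Theorems II.1 and II.3, and the only point meriting care is verifying that the two sign factors annihilate, so that the result in fact holds for all $N$ rather than only odd $N$.
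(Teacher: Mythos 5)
Your proposal is correct and follows the same route the paper intends: the corollary is obtained by combining the specialized form of Theorem II.1 (where $\partial^2 H/\partial p\,\partial q = 0$ yields the factor $(-1)^{N-1}$) with the identity $\det \tilde{A}_N = (-1)^{N-1} m \det A_N$ of Theorem II.3, after which the two sign factors cancel. Your added observation that this cancellation holds for every $N \geq 2$, so the parity convention adopted for equation \eqref{eq:GYspecialH} is not actually needed for the corollary, is accurate and consistent with the paper's statement, which imposes no parity restriction.
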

\section{Asymptotics and a Lattice Regularization} \label{section3}

In this section we will show that the discrete operators $\tilde{A}_N$ and $A_N$ converge to their continuous counterparts in the continuum limit. Moreover, we will show that we can make sense of the determinants of the $A_N$ and $\tilde{A}_N$ in this limit. This will lead us to define a lattice regularization in regards to the determinants of these operators.\\

\noindent As in section \ref{section2.3}, we will be considering the one-dimensional case where $N$ is odd and $H(p_i,q_i) = \frac{1}{2m}p_i^2+V(q_i)$. We will also employ the following notation as short hand,
\begin{equation}
    a_1 = \derderf{1}{q}{1}{q}{1}(q_1,b_1) \quad \quad a_2 = \derderf{2}{q}{N}{q}{N}(q_N,b_2) \label{eq:a1a2def}
\end{equation}

\subsection{Convergence of $\tilde{A}_N$} \label{section3.1}
Here we consider the operator $\tilde{A}$ given by equation \eqref{eq:optildeA}. We denote the associated twice differentiable, continuous Hamiltonian by $\mathcal{H}\big(p(t),q(t)\big)$. The operator $\tilde{A}$ acts on the domain,
\begin{equation*}
    D\big(\tilde{A}\big) = \left\{ \begin{pmatrix}x_1(t) \\ x_2(t)\end{pmatrix} \; \Big| \; x_1,x_2\in C^1\big([0,T]\big) \text{ and }  x_1(0) = a_1 x_2(0), \, x_1(T) = a_2 x_2(T)\right\}
\end{equation*}
where the last two conditions are just the boundary conditions stated in \eqref{eq:LBCvec}.\\ 

\noindent The associated discrete operator, $\tilde{A}_N$ arises from the discrete Hamilitonian $H(p_i,q_i) = \epsilon \cdot \mathcal{H}\big(p(t_i),q(t_i)\big)$. Recall that the parameter $\epsilon = \frac{T}{N-1}$ splits the interval $[0,T]$ into $N$ equally spaced time points. The domain of $\tilde{A}_N$ is, 
\begin{equation*}
    D(\tilde{A}_N) = \left\{ \begin{pmatrix}x_1(t_1) \\ \vdots \\ x_1(t_{N-1}) \\ x_2(t_1) \\ \vdots \\ x_2(t_N) \end{pmatrix} \;:\; \begin{pmatrix}x_1(t) \\ x_2(t)\end{pmatrix} \in D\big(\tilde{A}\big) \right\}
\end{equation*}
\begin{theorem}
The discrete operator $\tilde{A}_N$ converges weakly to the operator $\tilde{A}$ as $N \to \infty$ for any twice differentiable Hamiltonian $\mathcal{H}\big(p(t),q(t)\big)$. 
\end{theorem}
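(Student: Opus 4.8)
The plan is to interpret weak convergence as convergence of the associated bilinear (equivalently, quadratic) forms. Both $\tilde{A}$ and $\tilde{A}_N$ arise as second variations, so the natural statement is that for every pair $(x_1,x_2)\in D(\tilde{A})$ the sampled vector $\mathbf{x}_N\in D(\tilde{A}_N)$ satisfies
\[
\mathbf{x}_N^{T}\tilde{A}_N\mathbf{x}_N \;\longrightarrow\; \int_0^T (x_1,x_2)\,\tilde{A}\begin{pmatrix}x_1\\x_2\end{pmatrix}dt \qquad (N\to\infty),
\]
with the general bilinear pairing following by polarization since $\tilde{A}_N$ is symmetric ($D_3=D_2^{T}$). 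Using the block form and the rescaled discrete Hamiltonian $H(p_i,q_i)=\epsilon\,\mathcal{H}\big(p(t_i),q(t_i)\big)$, I would expand the left side as $\mathbf{x}_1^{T}D_1\mathbf{x}_1+2\,\mathbf{x}_1^{T}D_2\mathbf{x}_2+\mathbf{x}_2^{T}D_4\mathbf{x}_2$ and treat the three groups of terms separately.

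First, the three \emph{potential} contributions --- the diagonal $D_1$ term, the diagonal part of $D_2$, and the interior diagonal of $D_4$ --- each carry an explicit factor $\epsilon$ multiplying a value of a second derivative of $\mathcal{H}$ along the classical path at a lattice point. Since $\mathcal{H}\in C^2$ and $x_1,x_2\in C^1$, these are left Riemann sums converging to $-\int_0^T\!\mathcal{H}_{pp}x_1^2\,dt$, $-2\int_0^T\!\mathcal{H}_{qp}x_1x_2\,dt$, and $-\int_0^T\!\mathcal{H}_{qq}x_2^2\,dt$ respectively; the single omitted $i=N$ term is $O(\epsilon)$ and drops out. I would bundle these Riemann-sum errors, together with the mean-value errors below, into one bound of order $\epsilon$.

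The substantive step is the derivative term together with the boundary data buried in $D_4$. The $\pm1$ entries of $D_2$ produce the finite difference $2\sum_{i=1}^{N-1}x_1(t_i)\big(x_2(t_{i+1})-x_2(t_i)\big)$, which by the mean value theorem is a Riemann sum converging to $2\int_0^T x_1\dot{x}_2\,dt$. This is only half of the skew-symmetric continuous term $\int_0^T(x_1\dot{x}_2-x_2\dot{x}_1)\,dt$; the missing half must be supplied by the two boundary terms $a_1x_2(t_1)^2-a_2x_2(t_N)^2$ coming from the $(1,1)$ and $(N,N)$ entries of $D_4$. I would close the argument by integrating by parts, $\int_0^T(x_1\dot{x}_2-x_2\dot{x}_1)\,dt=2\int_0^T x_1\dot{x}_2\,dt-\big[x_1x_2\big]_0^T$, and then invoking the Lagrangian boundary conditions $x_1(0)=a_1x_2(0)$ and $x_1(T)=a_2x_2(T)$ to rewrite $\big[x_1x_2\big]_0^T=a_2x_2(T)^2-a_1x_2(0)^2$, which exactly matches the discrete boundary terms in the limit $t_1\to0$, $t_N\to T$.

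The main obstacle I anticipate is precisely this boundary bookkeeping. Because the lattice stores $x_1$ only at $t_1,\dots,t_{N-1}$ while $x_2$ lives at $t_1,\dots,t_N$, the finite-difference term is asymmetric, and it is not obvious a priori that the extra boundary pieces hidden in $D_4$ are exactly what is needed to symmetrize it in the limit. Tracking this shows that convergence uses the boundary conditions defining $D(\tilde{A})$ in an essential way --- it would fail on sampled vectors violating them --- so the role of the Lagrangian data $a_1,a_2$ must be kept explicit throughout. Once the boundary terms are reconciled, the remaining estimates are the routine $O(\epsilon)$ Riemann-sum and Taylor bounds afforded by the $C^2$ regularity of $\mathcal{H}$ and the $C^1$ regularity of $x_1,x_2$, and collecting them yields the claimed convergence.
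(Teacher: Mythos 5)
Your proposal is correct and takes essentially the same approach as the paper: both test $\tilde{A}_N$ against sampled vectors from $D(\tilde{A})$, identify the block entries as Riemann sums and difference quotients converging to the corresponding terms of $\int_0^T Y^T \tilde{A}\, X \, dt$, and use the Lagrangian boundary conditions built into the domain to dispose of the leftover boundary terms. The only cosmetic differences are that you work with the quadratic form and recover the bilinear pairing by polarization (valid since $D_3 = D_2^T$), and that you reconcile the $a_1$, $-a_2$ entries of $D_4$ with the integration-by-parts term $\left[x_1 x_2\right]_0^T$, whereas the paper keeps the bilinear form and observes that its discrete boundary remainders vanish identically on $D(\tilde{A})$.
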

\begin{proof}
Let's first define the vectors $X,Y \in D\big(\tilde{A}\big)$ as
\begin{equation*}
    X = \begin{pmatrix} x_1(t) \\ x_2(t) \end{pmatrix}, \quad Y = \begin{pmatrix} y_1(t) \\ y_2(t) \end{pmatrix}
\end{equation*}
and the corresponding vectors $X_N, \, Y_N \in D(\tilde{A}_N)$ as,
\begin{equation*}
    X_N = \begin{pmatrix} x_1(t_1) \\ \vdots \\ x_1(t_{N-1}) \\ x_2(t_1) \\ \vdots \\ x_2(t_N)\end{pmatrix}, \quad Y_N = \begin{pmatrix} y_1(t_1) \\ \vdots \\ y_1(t_{N-1}) \\ y_2(t_1) \\ \vdots \\ y_2(t_N) \end{pmatrix}
\end{equation*}
To show weak convergence, we will show that
\begin{equation}
    \lim_{N \to \infty} Y_N^T\mathcal{D}_NX_N = \int_0^T Y^T \tilde{A} X \, dt \label{eq:weakconv}
\end{equation}
We compute that,
\begin{align*}
    Y^T_N\tilde{A}_NX_N &= -\sum_{i=1}^{N-1} \epsilon y_1(t_i)\frac{\partial^2 \mathcal{H}}{\partial p^2}x_1(t_i) + \sum_{i=1}^{N-1} \epsilon y_1(t_i)\left[\left(\frac{x_2(t_{i+1})-x_2(t_i)}{\epsilon}\right)-\frac{\partial^2\mathcal{H}}{\partial p \partial q}x_2(t_i)\right] \\
    & \quad\quad - \sum_{i=1}^{N-1} \epsilon y_2(t_i) \frac{\partial^2 \mathcal{H}}{\partial q^2} x_2(t_i) - \sum_{i=1}^{N-1} \epsilon y_2(t_i) \frac{\partial^2 \mathcal{H}}{\partial p \partial q}x_1(t_i) - \sum_{i=1}^{N-2}\epsilon y_2(t_i)\left(\frac{x_1(t_{i+1})-x_1(t_i)}{\epsilon}\right) \\
    & \quad\quad\quad\quad -\big(x_1(t_1) - a_1x_2(t_1)\big) + \big(x_1(t_{N-1}) - a_2x_2(t_N)\big) 
\end{align*}
Now taking the limit gives, 
\begin{align*}
    \lim_{N\to\infty} Y^T_N\tilde{A}_NX_N &= -\int_0^T y_1(t)\frac{\partial^2 \mathcal{H}}{\partial p^2}x_1(t) dt + \int_0^T y_1(t)\left[x_2'(t)-\frac{\partial^2 \mathcal{H}}{\partial p \partial q}x_2(t)\right]dt \\
    &\quad \quad - \int_0^T y_2(t)\left[x_1'(t) + \frac{\partial^2 \mathcal{H}}{\partial p \partial q}\right]dt - \int_0^T y_2(t) \frac{\partial^2 \mathcal{H}}{\partial p^2} x_2(t) dt \\
    &\quad\quad\quad\quad - \big(x_1(0) - a_1x_2(0)\big) + \big(x1(T) - a_2x_2(T)\big) \\
    \lim_{N\to\infty} Y^T_N\tilde{A}_NX_N &= \int_0^T Y^T\tilde{A}X dt - \big(x_1(0) - a_1x_2(0)\big) + \big(x_1(T) - a_2x_2(T)\big) 
\end{align*}
The boundary terms are zero for all $X \in D\big(\tilde{A}\big)$ and so the above statement is exactly equation \eqref{eq:weakconv}.
\end{proof}

\noindent Now let us restrict to the case where $\mathcal{H}\big(p(t),q(t)\big) = \frac{1}{2m}p(t)^2 + V\big(q(t)\big)$. In this case, taking the limit of equation \eqref{eq:GYspecialH} (under the convention that $N$ is odd) gives,
\begin{equation}
    \lim_{N\to\infty} \det\tilde{A}_N = \frac{ \derderf{1}{q}{}{b}{1} \derderf{2}{q'}{}{b}{2}}{\frac{\partial^2 \tilde{S}_{\tilde{\gamma}_c}(b_1,b_2)}{\partial b_1 \partial b_2}} \label{eq:latticeregtildeA}
\end{equation}
The right hand side of the above equation is well-defined and finite, therefore the limit on the left hand side is also well-defined and finite. We will use this limit later in section \ref{section3.3} to define lattice-regularization. Note that the convergence of this limit is no longer clear in the case of a Hamiltonian with mixed terms and needs further understanding.

\subsection{Convergence of $A_N$} \label{section3.2}
We now return to the operator $A$ from equation \eqref{eq:opA} and its finite counterpart $A_N$. 
\begin{theorem}
The operator $A_N$ weakly converges to the operator $A$.
\end{theorem}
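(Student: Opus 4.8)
The plan is to follow the same template as the preceding theorem on the weak convergence of $\tilde{A}_N$: fix $x,y \in C^2([0,T])$ satisfying the mixed boundary conditions \eqref{eq:LBCmixed}, form the sampled vectors $X_N = (x(t_1),\dots,x(t_N))^T$ and $Y_N = (y(t_1),\dots,y(t_N))^T$, and establish weak convergence in the sense of \eqref{eq:weakconv} by showing $\lim_{N\to\infty} Y_N^T A_N X_N = \int_0^T y(t)\, A\, x(t)\, dt$. Here $A_N$ is the $\epsilon$-rescaled version of \eqref{eq:defAN}: the off-diagonal entries become $-1/\epsilon$, the interior diagonal entries become $2/\epsilon - (\epsilon/m)V''_j$, and the two corner entries carry the boundary data as $(A_N)_{11} = 1/\epsilon + a_1/m - (\epsilon/m)V''_1$ and $(A_N)_{NN} = 1/\epsilon - a_2/m$, so that setting $\epsilon = 1$ recovers \eqref{eq:defAN} exactly.

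First I would expand $Y_N^T A_N X_N$ directly from the matrix entries, grouping the contribution row by row. The interior rows $2 \le j \le N-1$ produce the centered second difference $-(x_{j+1}-2x_j+x_{j-1})/\epsilon^2$, so that $\sum_{j=2}^{N-1}\epsilon\, y_j \cdot \big[-(x_{j+1}-2x_j+x_{j-1})/\epsilon^2\big]$ is a Riemann sum for $-\int_0^T y x''\,dt$. The potential contributes $-(\epsilon/m)\sum_{j=1}^{N-1} V''_j\, y_j x_j$, a Riemann sum for $-\tfrac1m\int_0^T V''(q_c)\, y x\, dt$; note this sum naturally stops at $N-1$, reflecting that the discrete action \eqref{eq:actionH2D} carries the potential only up to $q_{N-1}$, which is precisely why $(A_N)_{NN}$ has no $V''_N$ term.

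The crux is the two corner rows, whose diagonal is $1/\epsilon$ rather than $2/\epsilon$. Row $1$ collapses to $\tfrac1\epsilon y_1(x_1-x_2) = -y(t_1)\,(x_2-x_1)/\epsilon \to -y(0)x'(0)$, and row $N$ collapses to $\tfrac1\epsilon y_N(x_N-x_{N-1}) = y(t_N)\,(x_N-x_{N-1})/\epsilon \to y(T)x'(T)$, while the corner products contribute $\tfrac{a_1}{m}y(0)x(0)$ and $-\tfrac{a_2}{m}y(T)x(T)$ in the limit. Collecting everything and comparing with $\int_0^T y A x\, dt = -\int_0^T y x''\, dt - \tfrac1m\int_0^T V''(q_c) y x\, dt$, the limit equals $\int_0^T y A x\, dt$ plus the residual boundary expression $-y(0)\big(x'(0) - \tfrac{a_1}{m}x(0)\big) + y(T)\big(x'(T) - \tfrac{a_2}{m}x(T)\big)$, which vanishes identically because $x \in D(A)$ satisfies \eqref{eq:LBCmixed}. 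This is the structural analog of the cancellation of the boundary terms $x_1(0)-a_1 x_2(0)$ and $x_1(T)-a_2 x_2(T)$ in the preceding theorem.

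The main obstacle I anticipate is the bookkeeping at the two endpoints: because the corner diagonals are $1/\epsilon$ (a Neumann-type entry) rather than the interior $2/\epsilon$, and because $V''_N$ is absent, the discrete quadratic form produces one-sided difference quotients there instead of second differences, and these must be identified correctly with $-y(0)x'(0)$ and $y(T)x'(T)$ so that the Robin conditions $x'(0) = \tfrac{a_1}{m}x(0)$ and $x'(T) = \tfrac{a_2}{m}x(T)$ make the leftover terms cancel. The only analytic input is that the forward, backward, and centered difference quotients converge uniformly to $x'$ and $x''$ on the compact interval $[0,T]$, which follows from $x \in C^2$; given that, every remaining limit is a standard Riemann sum and the result follows.
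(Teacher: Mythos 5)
Your proposal is correct and follows essentially the same route as the paper's proof: the paper likewise reinserts the $\epsilon$'s into $A_N$, expands $Y_N^T A_N X_N$ into the interior second-difference Riemann sum, the potential sum over $i=1,\dots,N-1$, and the two one-sided corner rows, and then kills the residual boundary terms using the Robin conditions defining $D(A)$. In fact your boundary bookkeeping is slightly more careful than the paper's: your residual $-y(0)\left(x'(0)-\tfrac{a_1}{m}x(0)\right)$ carries the correct sign, whereas the paper's displayed limit reads $y(0)\left(x'(0)+\tfrac{a_1}{m}x(0)\right)$, an apparent sign typo since that expression does not vanish under $x'(0)=\tfrac{a_1}{m}x(0)$.
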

\begin{proof}
We first must define the domains of the operators $A$ and $A_N$. The operator $A$ has the domain, 
\begin{equation*}
    D(A) = \left\{ y(t) \in C^2([0,T]) \; | \; y'(0) = a_1 \cdot y(0), \; y'(T) = a_2 \cdot y(T) \right\}
\end{equation*}
where the mixed boundary conditions match the boundary conditions on $\tilde{A}$ given by equation \eqref{eq:LBCvec}. The domain of the operator $A_N$ is,
\begin{equation*}
    D(A_N) = \left\{\begin{pmatrix}y(t_1) \\ \vdots \\ y(t_N) \end{pmatrix} \; \Bigg| \; y(t) \in D(A) \right\}
\end{equation*}
Previously when defining $A_N$ we used the convention $\epsilon = 1$, so we first need to reinsert epsilons into $A_N$ where appropriate. For the case of $N = 4$ and $A = -\frac{d^2}{dt^2} - \frac{1}{m}V''(q_c(t))$ the operator $A_N$ is,
\begin{equation*}
    A_4 = 
    \begin{pmatrix}
    \frac{a_1}{m} + \frac{1}{\epsilon} - \frac{\epsilon}{m}V''\big(q_c(t_1)\big) & -\frac{1}{\epsilon} & 0 & 0 \\
    -\frac{1}{\epsilon} & \frac{2}{\epsilon} -\frac{\epsilon}{m}V''\big(q(t_2)\big) & -\frac{1}{\epsilon} & 0 \\
    0 & -\frac{1}{\epsilon} & 2 -\frac{\epsilon}{m}V''\big(q(t_2)\big) & -\frac{1}{\epsilon} \\
    0 & 0 & -\frac{1}{\epsilon} & -\frac{a_2}{m} + \frac{1}{\epsilon} 
    \end{pmatrix}
\end{equation*}
The above is easily generalized for arbitrary $N$. Let $x(t),y(t) \in D(A)$ and let $X_N,Y_N \in D(A_N)$ be their corresponding discrete versions. We will show that, 
\begin{equation}
    \lim_{N\to\infty} Y_N^TA_NX_N = \int_0^T y(t)Ax(t) dt \label{eq:weakconv2}
\end{equation}
First we compute,
\begin{align*}
    Y_NA_NX_N &= y(t_1)\left(-\frac{x(t_2) - x(t_1)}{\epsilon} + \frac{a_1}{m}x(t_1)\right) - \sum_{i=1}^{N-1} y(t_i)\frac{\epsilon}{m}V''\big(q_c(t_i)\big)x(t_i) \\
    &\quad\quad - \sum_{i=2}^{N-1} \epsilon y(t_i)\left(\frac{x(t_{i+1}) - 2x(t_i) + x(t_{i-1})}{\epsilon^2}\right) + y(t_N)\left(\frac{x(t_N) - x(t_{N-1})}{\epsilon} - \frac{a_2}{m}x(t_N)\right)
\end{align*}
Taking the limit yields, 
\begin{align*}
    \lim_{N\to\infty} Y_NA_NX_N &= y(0)\left(x'(0) + \frac{a_1}{m}x(0)\right) - \int_0^T y(t)\frac{1}{m}V''\big(q_c(t)\big)x(t)dt \\
    &\quad\quad\quad\quad - \int_0^T y(t)x''(t) dt + y(T)\left(x'(T) - \frac{a_2}{m}x(T)\right) \\
    \lim_{N\to\infty} &= \int_0^T y(t)Ax(t)dt + y(0)\left(x'(0) + \frac{a_1}{m}x(0)\right) + y(T)\left(x'(T) - \frac{a_2}{m}x(T)\right)
\end{align*}
The boundary terms are zero for all $x(t) \in D(A)$ and so the above statement is exactly equation \eqref{eq:weakconv2}.
\end{proof}

\noindent Again we will restrict to the case of $\mathcal{H}\big(p(t),q(t)\big) = \frac{1}{2m}p(t)^2 + V\big(q(t)\big)$. After generalizing for arbitrary $\epsilon$, equation \eqref{eq:relateANtildeAN} becomes
\begin{equation*}
    \det \tilde{A}_N = m \epsilon^{N-1}  \det A_N
\end{equation*}
where we must now be cognisant of the epsilons in $A_N$ and $\tilde{A}_N$. Plugging this into \eqref{eq:GYspecialH}, in order to get something convergent we must take a regularized determinant where we throw out the factor of $\epsilon^{-N+1}$,
\begin{equation*}
    \lim_{N\to\infty} \det{}' A_N = \frac{ \derderf{1}{q}{}{b}{1} \derderf{2}{q'}{}{b}{2}}{m\frac{\partial^2 \tilde{S}_{\tilde{\gamma}_c}(b_1,b_2)}{\partial b_1 \partial b_2}}
\end{equation*}
where the apostrophe indicates that we have removed the epsilons. Again, the right hand side above is well-defined and finite.\\

\noindent It should be noted that for arbitrary $\epsilon$, the determinant of $\tilde{A}_N$ converges plainly, however the determinant of $A_N$ does not. In the latter case we need to remove the divergence. This might motivate the Hamilton-Jacobi operator being a more natural choice over Laplacian-type operators. 

\subsection{Defining a Lattice Regularization} \label{section3.3}
As show in sections \ref{section3.1} and \ref{section3.2}, one can make meaning out of the limits $\lim_{N\to\infty} \det \tilde{A}_N$ and $\lim_{N\to\infty} \det A_N$ in the case where $H(p_i,q_i) = \frac{p_i^2}{2m} + V(q_i)$. The following definition is a natural consequence,
\begin{definition}
We define the lattice regularized determinants of $A$ and $\tilde{A}$ by,
\begin{equation}
    \det{}_{\text{reg}}(A) = \lim_{N\to\infty} \det{}'(A_N)
\end{equation}
\begin{equation}
    \det{}_{\text{reg}}\big(\tilde{A}\big) = \lim_{N\to\infty} \det\big(\tilde{A}_N\big)
\end{equation}
\end{definition}
\noindent Tautologically, we have the identity
\begin{equation}
    \det{}_\text{reg} \big(\tilde{A}\big) = m\det{}_\text{reg}\big(A\big) \label{eq:detAtildeArelate}
\end{equation}
The above definitions accompanied with equation \eqref{eq:latticeregtildeA} give use a generalized GY formula for the lattice regularized determinant of the operator $A$,
\begin{equation}
    \det{}_\text{reg} A = \frac{ \derderf{1}{q}{}{b}{1} \derderf{2}{q'}{}{b}{2}}{m\frac{\partial^2 \tilde{S}_{\tilde{\gamma}_c}(b_1,b_2)}{\partial b_1 \partial b_2}} \label{eq:GYlatticeregA}
\end{equation}

\section{A Generalized Gelfand-Yaglom Formula for the Zeta Regularization} \label{section4}

\noindent In this section we will first derive a Gelfand Yaglom formula for the $\zeta$-regularized determinant of the second order operator $L = -\frac{d^2}{dt^2} + u(t)$ equipped with mixed boundary conditions. While this formula is not new (in fact, it was first derived more generally by Burghelea, Friedlander, and Kappeler\cite{Kep}) we will specifically relate it to the operator $A$ with relevant boundary conditions. Moreover, we will compare the results to the formula in equation \eqref{eq:GYlatticeregA}.

\subsection{Derivation of a generalized GY formula for the configuration space} \label{section4.1}
Let $u(t) \in C^1\left([0,T],\mathbb{R}\right)$.  We will consider the differential operator, 
\begin{equation}
    L = - \frac{d^2}{dt^2} + u(t) \label{eq:Lop}
\end{equation}
on the interval $t \in [0,T]$ with the domain,
\begin{equation}
    D(L) = \left\{y(t) \in W^{2,2}(0,T) \; : \; \frac{dy(0)}{dt} = \frac{a_1}{m} y(0), \; \frac{dy(T)}{dt} = \frac{a_2}{m} y(T) \right\} \label{eq:domainL}
\end{equation}
where $W^{2,2}(0,T)$ denotes the Sobelov space and $a_1$, $a_2$ and $m$ are nonzero constants named suggestively. We will also need to consider the second order differential equation,
\begin{equation}
    -\ddot{y} + u(t)y = \lambda y \label{eq:diffeqy}
\end{equation}
with parameter $\lambda$ and where a dot denotes the derivative with respect to $t$. Let $y_1(t,\lambda)$ and $y_2(t,\lambda)$ denote two solutions of \eqref{eq:diffeqy} with the following boundary conditions,
\begin{equation}
    y_1(0,\lambda) = 1, \; \dot{y}_1(0,\lambda) = \frac{a_1}{m}
\end{equation}
\begin{equation}
    y_2(T,\lambda) = 1, \; \dot{y}_2(T,\lambda) = \frac{a_2}{m}
\end{equation} 
We are now able to state the following result, which is a specialization of a theorem first proved by Burghelea, Friedlander, and Kappeler\cite{Kep},
\begin{theorem}
Let $y_1(t) = y_1(t,0)$ be the solution given above. Then,
\begin{equation}
    \det{}_\zeta \, L = 2 \left(\dot{y}_1(T) - \frac{a_2}{m} y_1(T)\right)
\end{equation}
where $L$ is the differential operator defined by equations \eqref{eq:Lop} and \eqref{eq:domainL}.
\end{theorem}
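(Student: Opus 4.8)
The plan is to reduce the computation of $\det{}_\zeta L$ to the evaluation at $\lambda = 0$ of the secular (characteristic) function
\begin{equation*}
    \Delta(\lambda) = \dot{y}_1(T,\lambda) - \frac{a_2}{m} y_1(T,\lambda),
\end{equation*}
and then to prove the stronger, $\lambda$-dependent identity $\det{}_\zeta(L-\lambda) = 2\,\Delta(\lambda)$ on the resolvent set, specializing to $\lambda = 0$ at the end. The first step is to verify that $\Delta$ is entire in $\lambda$ and that its zeros coincide, with multiplicity, with the eigenvalues of $L$. This is immediate from the construction of $y_1$: by its initial data $y_1(0,\lambda)=1$, $\dot{y}_1(0,\lambda)=a_1/m$, it already satisfies the left mixed boundary condition in $D(L)$ for every $\lambda$, so $\lambda$ is an eigenvalue precisely when $y_1(\cdot,\lambda)$ also satisfies the right condition $\dot{y}(T) = (a_2/m)y(T)$, i.e.\ exactly when $\Delta(\lambda)=0$. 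Analyticity and the order $1/2$ growth of $\Delta$ in $\lambda$ follow from the standard analytic dependence of ODE solutions on the spectral parameter.

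Next I would show that the ratio $\Phi(\lambda) = \det{}_\zeta(L-\lambda)/\Delta(\lambda)$ is independent of $\lambda$. The key analytic input is the identity
\begin{equation*}
    \frac{\partial}{\partial \lambda} \log \det{}_\zeta(L-\lambda) = -\operatorname{Tr}\big[(L-\lambda)^{-1}\big],
\end{equation*}
obtained by differentiating the $\zeta$-function under the Mellin integral and using that $s=0$ is a regular point (Seeley's theorem). On the other side, since $\Delta$ is entire with simple zeros exactly at the $\lambda_n$, the argument principle gives $\partial_\lambda \log \Delta(\lambda) = \sum_n (\lambda-\lambda_n)^{-1}$. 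Both expressions are meromorphic with simple poles of residue $1$ at each eigenvalue and agree up to the regularization of the divergent trace; the comparison then shows $\partial_\lambda \log\Phi(\lambda) = 0$, so $\Phi$ is constant.

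It remains to pin down the constant, which I would do in the limit $\lambda = -\kappa^2 \to -\infty$. A WKB analysis of $-\ddot{y}+uy = \lambda y$ with the given initial data yields $y_1(t,-\kappa^2) \sim \tfrac{1}{2}e^{\kappa t}$ and hence $\Delta(-\kappa^2) \sim \tfrac{\kappa}{2}e^{\kappa T}$, while the large-$\kappa$ heat-kernel asymptotics of $\log\det{}_\zeta(L+\kappa^2)$ give $\det{}_\zeta(L+\kappa^2) \sim \kappa\,e^{\kappa T}$; both leading behaviors are independent of the potential $u$, so $\Phi = 2$ is a universal constant. Equivalently, one can fix this constant by specializing to the explicitly solvable free case $u \equiv 0$, where the spectrum and $\Delta$ are elementary and the equality $\det{}_\zeta(-d^2/dt^2) = 2\,\Delta_0(0)$ can be checked directly. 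Setting $\lambda = 0$ in $\det{}_\zeta(L-\lambda) = 2\,\Delta(\lambda)$ then gives the claimed formula.

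The main obstacle is the rigorous justification of $\partial_\lambda \log \det{}_\zeta(L-\lambda) = -\operatorname{Tr}[(L-\lambda)^{-1}]$ and the accompanying control of the analytic continuation: the resolvent trace diverges and must be renormalized, and one must verify that this renormalization introduces no extra $\lambda$-dependence that would spoil constancy of $\Phi$. Coupled to this is the careful extraction of the subleading terms in the $\kappa \to \infty$ asymptotics, where the Robin data $a_1/m$ and $a_2/m$ enter the boundary heat-kernel coefficients; one must confirm that these boundary contributions cancel in the ratio and leave the universal value $2$ intact. This is precisely the content specialized from Burghelea, Friedlander, and Kappeler\cite{Kep}, so an alternative route is simply to invoke their general theorem and match the boundary-condition data to \eqref{eq:domainL}.
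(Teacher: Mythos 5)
Your proposal is correct, and its skeleton matches the paper's proof: both establish the $\lambda$-dependent identity $\det{}_\zeta(L-\lambda I) = 2\,\Delta(\lambda)$ with $\Delta(\lambda) = \dot{y}_1(T,\lambda) - \frac{a_2}{m}y_1(T,\lambda)$ by showing the two sides have the same logarithmic derivative in $\lambda$, then fixing the multiplicative constant from asymptotics as $\lambda \to -\infty$, and finally setting $\lambda = 0$. Where you genuinely differ is in the execution of the middle step. The paper never invokes Hadamard factorization: it computes $-\mathrm{Tr}\, R_\lambda$ directly, writing the resolvent kernel $R_\lambda(x,\xi)$ via variation of parameters in terms of $y_1$, $y_2$ and their Wronskian, and then using the identity $\int_0^T y_1 y_2\, dx = -\left[W\!\left(\frac{dy_1}{d\lambda},y_2\right)\right]_0^T$ (up to sign conventions) to land in one stroke on $\frac{d}{d\lambda}\log\left[\frac{a_2}{m}y_1(T,\lambda) - \dot{y}_1(T,\lambda)\right]$. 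Your route instead requires three spectral facts --- that the zeros of $\Delta$ are exactly the eigenvalues, that they are simple, and that $\Delta$ is entire of order $\frac{1}{2}$ so its Hadamard product carries no exponential factor --- all standard for regular Sturm--Liouville problems with separated conditions, but each needing proof, whereas the paper's Wronskian computation avoids any growth estimate on $\Delta$. For the normalization, the paper performs an explicit contour-integral zeta computation (following Kirsten) combined with the Fulton--Pruess asymptotics of $y_1(T,\lambda)$, obtaining $\det{}_\zeta(L+\mu I) = 2\sqrt{\mu}\sinh\left(T\sqrt{\mu}\right) + O\!\left(e^{T\sqrt{\mu}}\right)$; your WKB-plus-heat-kernel argument produces the same leading behavior $\kappa e^{\kappa T}$ and is an acceptable substitute. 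Two caveats on your own flagged ``obstacles.'' First, the difficulty you name as central is not actually present: in one dimension $\lambda_n = \frac{\pi^2 n^2}{T^2} + O(1)$, so $R_\lambda$ is genuinely trace class and $\mathrm{Tr}\, R_\lambda = \sum_n (\lambda_n - \lambda)^{-1}$ converges absolutely --- no renormalization is needed, and the only point requiring justification is differentiating the analytic continuation of the zeta function in $\lambda$, which both you and the paper handle at the level of citation. Second, your fallback normalization via the ``explicitly solvable'' free case at $\lambda = 0$ is delicate: for $u \equiv 0$ with $a_1 = a_2 = 0$ the operator has a zero mode, so $\Delta(0) = 0$ while the modified determinant excises the zero eigenvalue, and for generic $a_1, a_2$ the free spectrum is only transcendentally determined; the asymptotic normalization you describe first is the robust route and is the one to keep.
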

\begin{proof}
Let's start by taking a closer look at the differential operator. The operator $L$ is a regular Sturm-Liouville operator and thus has a discrete spectrum with simple eigenvalues, $\lambda_1 < \lambda_2 < \cdots < \lambda_n < \cdots$, accumulating to $\infty$. Moreover, for large $n$,
\begin{equation*}
    \lambda_n = \frac{\pi^2n^2}{T^2} + O(1)
\end{equation*}
The details of this can be found in \textit{Sturm-Liousville and Dirac Operators}\cite{Sturm} by Levitan and Sargsyan among other texts. It then follows that the resolvent of $L$, $R_{\lambda} = (L - \lambda I)^{-1}$, is a trace class operator. So we can write the useful relation,
\begin{equation}
    \frac{d}{d\lambda}\log \det{}_\zeta (L - \lambda I) = -\text{Tr } R_\lambda \label{eq:zetadettrace}
\end{equation}
where any zero eigenvalues are removed. Using variation of parameter on the inhomogeneous equation,
\begin{equation*}
    -\ddot{y}+u(t)y = \lambda y + f(x), \quad \lambda \neq \lambda_n
\end{equation*}
we get the solution, 
\begin{equation*}
    y(x) = \int_0^T R_\lambda(x,\xi)f(\xi)d\xi
\end{equation*}
where 
\begin{equation}
    R_\lambda(x,\xi) = \begin{cases}
    \frac{y_1(x,\lambda)y_2(\xi,\lambda)}{W(y_1,y_2)} & \text{ if } x\leq \xi \\
    \frac{y_1(\xi,\lambda)y_2(x,\lambda)}{W(y_1,y_2)} & \text{ if } x\geq \xi \\
    \end{cases}
\end{equation}
is the resolvent of $L$. In the above, $W(y_1,y_2)$ denotes the Wronskian of the two solutions. We manipulate the right hand side of equation \eqref{eq:zetadettrace} as follows,
\begin{align*}
    -\text{Tr } R_\lambda &= -\int_0^T R_\lambda(x,x) dx \\
        &= -\frac{1}{W(y_1,y_2)}\int_0^T y_1(x,\lambda)y_2(x,\lambda)dx \\
        &= \frac{1}{W(y_1,y_2)}\left[W\left(\frac{dy_1}{d\lambda},y_2\right)\right]_0^T \\
        &= \frac{d}{d\lambda} \log \left[\frac{a_2}{m} y_1(T,\lambda) - \dot{y}_1(T,\lambda)\right]
\end{align*}
Plugging the above back into \eqref{eq:zetadettrace} gives, 
\begin{equation}
    \det{}_\zeta(L - \lambda I) = C \cdot \left[\frac{a_2}{m}y_1(T,\lambda) - \dot{y}_1(T,\lambda)\right] \label{eq:whatsC}
\end{equation}
where $C$ is some constant. To compute $C$, we will let $\lambda = -\mu$ and consider the asymptotics of both sides of the equation as $\mu \to \infty$. To start let's compute the asymptotics of $\det(L+\mu I)$. We write the $\zeta$-function of $L+\mu I$ using the contour integral method described by Kirsten\cite{FuncDet},
\begin{equation}
    \zeta_{L+\mu I}(s) = \frac{1}{2\pi i} \int_{\gamma} dx x^{-s} \frac{d}{dx} \log \omega (x-\mu)
\end{equation}
where $\gamma$ is the curve encircling all the eigenvalues of $L + \mu I$ and $\omega(x-\mu)$ is a smooth function of $x$ with zero at the eigenvalues of the operator $L + \mu I$. Let $\sqrt{x} = \sigma + ri$, then
\begin{equation}
    \omega(x) = - \sqrt{x} \sin \left(T\sqrt{x}\right) + O\left(e^{|r|T}\right)
\end{equation}
The full computation of these asymptotics can be found in work by Fulton and Pruess\cite{FulPru}. Next we deform the contour and we rewrite the integral as, 
\begin{equation}
    \zeta_{L+\mu I}(s) = \frac{\sin (\pi s)}{\pi} \int_0^\infty dx x^{-s} \frac{d}{dx} \log \omega (-x-\mu)
\end{equation}
The above integral converges near 0 for $s=0$, however the integral does not converge near infinity for $s=0$. To analytically continue the function we write,
\begin{equation*}
    \zeta_{L+\mu I}(s) = \zeta_1(s) + \zeta_2(s) + \zeta_3(s)
\end{equation*}
where
\begin{align*}
    \zeta_1(s) &= \frac{\sin (\pi s)}{\pi} \int_0^1 dx x^{-s} \frac{d}{dx} \log \omega (-x-\mu) \\
    \zeta_2(s) &= \frac{\sin (\pi s)}{\pi} \int_1^\infty dx x^{-s} \frac{d}{dx} \log \left(\omega (-x-\mu)\frac{2}{\sqrt{x}}e^{-T\sqrt{x}}\right) \\
    \zeta_3(s) &= \frac{\sin (\pi s)}{\pi} \int_1^\infty dx x^{-s} \frac{d}{dx} \log \left(\frac{1}{2}\sqrt{x}e^{T\sqrt{x}}\right)
\end{align*}
The first two integrals converge for $s=0$ and we can easily analytically continue the third using the method described by Kirsten\cite{FuncDet}. Using the above we compute,
\begin{equation*}
    \zeta'_{L+\mu I}(0) = -\log 2w(-\mu)
\end{equation*}
and so the determinant is,
\begin{equation*}
    \det{}_\zeta(L+\mu I) = 2w(-\mu)
\end{equation*}
Asymptotically we can write, 
\begin{equation}
    \det{}_\zeta(L+\mu I) = 2\sqrt{\mu} \sinh\left(T\sqrt{\mu}\right) + O\left(e^{T\sqrt{\mu}}\right) \label{eq:asymDetL}
\end{equation}
Now we will consider the right hand side of equation \eqref{eq:whatsC}. Again, let $\sqrt{x}=\sigma+ri$. We will also let $k = \int_0^Tu(t)dt$. The function $y_1(T,x)$ has the following asymptotic expansions (again proven by Fulton and Pruess\cite{FulPru}),
\begin{equation*}
    y_1(T,x) = \cos(T\sqrt{x}) + \left(\frac{a_1}{m\sqrt{x}}+\frac{k}{2\sqrt{x}}\right)\sin(T\sqrt{x}) + O\left(\frac{1}{|x|}e^{T|r|}\right)
\end{equation*}
And so equation \eqref{eq:whatsC} becomes,
\begin{equation}
   \det{}_\zeta(L+\mu I) = C\cdot \left[-\sqrt{\mu}\sinh(T\sqrt{\mu})+O(e^{T\sqrt{\mu}})\right] \label{eq:asypy1}
\end{equation}
Comparing equations \eqref{eq:asymDetL} and \eqref{eq:asypy1}, we see that $C = -2$, so equation \eqref{eq:whatsC} becomes
\begin{equation}
    \det{}_\zeta(L - \lambda I) =2 \left( \dot{y}_1(T,\lambda) - \frac{a_2}{m}y_1(T,\lambda)\right)
\end{equation}
In particular, if we consider the case of $\lambda = 0$, we obtain the result from theorem IV.1.
\end{proof}

\noindent Note that in the case where $a_1 = a_2 = 0$ we recover the case of Neumann boundary conditions. The case of Dirichlet boundary conditions cannot be extracted from the above theorem, however the result is well known\cite{QMmath}. Let us now relate the above formula to the quantum system described in section \ref{section1.1} with Lagrangian boundary conditions.  First let,
\begin{equation*}
    u(t) = -\frac{1}{m}V''\left(q_c(t)\right) 
\end{equation*}
where $q_c(t)$ is the classical path. Note the classical path has the initial conditions, 
\begin{equation}
    q_c(0) = q, \quad \dot{q}_c(0) = \frac{1}{m} \frac{\partial f_1}{\partial q} \label{eq:classpathIC}
\end{equation}
which leads us to the following lemma,
\begin{lemma}
The function  $y(t) = \frac{\partial q_c(t)}{\partial q}$ with boundary conditions,
\begin{equation*}
    y(0) = 1 \quad \dot{y}(0) = \frac{a_1}{m}
\end{equation*}
satisfies the differential equation,
\begin{equation*}
    m\ddot{y}(t) = -V''\left(q_c(t)\right)y(t)
\end{equation*}
\end{lemma}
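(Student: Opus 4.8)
The plan is to recognize $y(t)=\partial q_c(t)/\partial q$ as the solution of the Jacobi (variational) equation obtained by differentiating the equation of motion with respect to the initial-position parameter $q$. Since $q_c(t)$ is the classical path solving $m\ddot{q}_c(t)=-V'(q_c(t))$ from \eqref{eq:ELeq} with the initial data \eqref{eq:classpathIC}, and since the potential is assumed twice differentiable so that $-V'$ is $C^1$, standard ODE theory guarantees that $q_c(t)$ depends smoothly on its initial conditions; in particular the partial derivative $\partial q_c/\partial q$ exists and is itself $C^2$ in $t$. This smoothness is what makes the formal manipulation below legitimate.

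First I would differentiate $m\ddot{q}_c(t)=-V'(q_c(t))$ with respect to the parameter $q$. Treating $q$ as a parameter and applying the chain rule on the right-hand side gives
\[
 m\,\frac{\partial}{\partial q}\ddot{q}_c(t) = -V''(q_c(t))\,\frac{\partial q_c(t)}{\partial q}.
\]
The one substantive step is then to interchange the order of the $t$- and $q$-derivatives on the left, so that $\partial_q\ddot{q}_c = \ddot{y}$; this is justified precisely by the smooth dependence on initial data noted above (equivalently, by differentiating under the integral sign in the associated integral equation for the flow). Substituting $y=\partial q_c/\partial q$ then yields the claimed equation $m\ddot{y}(t)=-V''(q_c(t))\,y(t)$.

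Finally I would check the boundary conditions by differentiating the initial data \eqref{eq:classpathIC} in $q$. From $q_c(0)=q$ we obtain $y(0)=\partial q/\partial q=1$, and from $\dot{q}_c(0)=\tfrac{1}{m}\,\partial f_1/\partial q\,(q,b_1)$ we obtain $\dot{y}(0)=\tfrac{1}{m}\,\partial^2 f_1/\partial q^2(q,b_1)=a_1/m$ by the definition of $a_1$ in \eqref{eq:a1a2def}. I do not expect any genuine obstacle here: the whole argument is a single differentiation of the equation of motion, and the only care required is the invocation of differentiable dependence of the Hamiltonian flow on its initial conditions. The payoff is that this $y$ is exactly the solution $y_1(t,0)$ feeding Theorem IV.1, since with $u(t)=-\tfrac{1}{m}V''(q_c(t))$ the equation $m\ddot{y}=-V''(q_c)y$ is identical to $-\ddot{y}+u(t)y=0$, which is what ties the Gelfand-Yaglom determinant to the derivative of the classical endpoint.
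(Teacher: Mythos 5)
Your proposal is correct and matches the paper's intent exactly: the paper dismisses this lemma as ``a simple exercise in derivatives,'' and the exercise it has in mind is precisely your argument --- differentiate $m\ddot{q}_c = -V'(q_c)$ with respect to the initial-position parameter $q$, interchange the $t$- and $q$-derivatives (justified by smooth dependence on initial data), and read off the boundary conditions from \eqref{eq:classpathIC} together with the definition of $a_1$. Your write-up simply supplies the details the paper omits, so no further comment is needed.
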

\noindent The above lemma is a simple exercises in derivatives. The following corollary is an immediate result of Theorem IV.1 and Lemma IV.2, 
\begin{corollary}
For the operator $A$ with domain given by \eqref{eq:domainL} we have the generalized Gelfand-Yaglom formula,
\begin{equation*} 
    \det{}_\zeta \, A = 2 \left(\frac{\partial \dot{q}_c(T)}{\partial q} - \frac{a_2}{m} \frac{\partial q_c(T)}{\partial q} \right)
\end{equation*}
where $q_c(t)$ is the classical path satisfying equations \eqref{eq:ELeq} and \eqref{eq:classpathIC}.
\end{corollary}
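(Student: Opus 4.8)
The plan is to identify the distinguished solution $y_1(t,0)$ appearing in Theorem IV.1 with the specific derivative $\partial q_c(t)/\partial q$ furnished by Lemma IV.2, and then simply read off the determinant formula. First I would set $u(t) = -\frac{1}{m}V''\left(q_c(t)\right)$, so that the abstract operator $L$ of \eqref{eq:Lop} coincides exactly with the operator $A$ of \eqref{eq:opA}; with this choice the domain $D(L)$ in \eqref{eq:domainL}, carrying the boundary data $\dot{y}(0)=\frac{a_1}{m}y(0)$ and $\dot{y}(T)=\frac{a_2}{m}y(T)$, matches verbatim the mixed boundary conditions \eqref{eq:LBCmixed} imposed on $A$. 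Thus Theorem IV.1 applies directly to $A$, giving $\det{}_\zeta A = 2\left(\dot{y}_1(T) - \frac{a_2}{m}y_1(T)\right)$, and the entire content of the corollary is the evaluation of the right-hand side in terms of the classical path.

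Next I would invoke Lemma IV.2, which exhibits $y(t) = \partial q_c(t)/\partial q$ as a solution of $m\ddot{y} = -V''(q_c)y$. Dividing by $m$ rewrites this as $-\ddot{y} + u(t)y = 0$, which is precisely the homogeneous equation \eqref{eq:diffeqy} specialized to $\lambda = 0$. The lemma also records the Cauchy data $y(0)=1$ and $\dot{y}(0) = \frac{a_1}{m}$, which are exactly the defining conditions for $y_1(t,\lambda)$ at $\lambda = 0$. By uniqueness of solutions to the initial value problem for this linear second-order ODE, it follows that $y_1(t) = y_1(t,0) = \partial q_c(t)/\partial q$ for every $t \in [0,T]$, and differentiating in $t$ gives $\dot{y}_1(t) = \partial \dot{q}_c(t)/\partial q$.

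Finally I would substitute $y_1(T) = \partial q_c(T)/\partial q$ and $\dot{y}_1(T) = \partial \dot{q}_c(T)/\partial q$ into the formula from Theorem IV.1, which produces the claimed expression for $\det{}_\zeta A$ immediately. There is no genuine obstacle in this argument: the single point requiring any care is the uniqueness step that pins down $y_1(t,0)$ as the particular derivative $\partial q_c(t)/\partial q$, and this is guaranteed because the two functions solve the same linear second-order equation with identical initial data at $t=0$. Everything else is direct matching of operators, domains, and initial conditions, which is why the corollary is stated as an immediate consequence of Theorem IV.1 and Lemma IV.2.
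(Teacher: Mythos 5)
Your proposal is correct and follows exactly the route the paper intends: the paper declares the corollary an immediate consequence of Theorem IV.1 and Lemma IV.2, and your argument simply spells out the implicit steps (setting $u(t)=-\frac{1}{m}V''(q_c(t))$ so that $L=A$, then using uniqueness for the linear initial value problem to identify $y_1(t,0)$ with $\partial q_c(t)/\partial q$ before substituting into the determinant formula). The uniqueness step you flag is the only substantive point, and you handle it correctly.
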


\subsection{A generalized GY formula for the phase space and $A$} \label{section4.2}
In this section, we will reformulate Corollary IV.2.1 to be in terms of derivatives of the action functional from \eqref{eq:actionH2}. We claim,
\begin{theorem}
For the action given by equation \eqref{eq:actionH2} with Hamiltonian $H(p,q) = \frac{p^2}{2m}+V(q)$, the following generalized Gelfand-Yaglom formula holds,
\begin{equation}
    \frac{\partial^2 \tilde{S}_{\tilde{\gamma}_c}(b_1,b_2)}{\partial b_1 \partial b_2} = 2\frac{\derderf{1}{q}{}{b}{1} \derderf{2}{q'}{}{b}{2}}{m \det{}_\zeta \, A} 
\end{equation}
\end{theorem}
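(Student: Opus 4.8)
The plan is to prove the formula directly from the $\zeta$-determinant expression of Corollary IV.2.1, bridging the action side and the operator side by a single application of the implicit function theorem. The continuous calculation mirrors the discrete first-variation argument of Theorem II.1, so the two halves of the identity are linked exactly as in equations \eqref{eq:dSdb1} and \eqref{eq:dSdb1db2}.

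First I would establish the continuous analogue of \eqref{eq:dSdb1}. Repeating the first-variation computation, when the action \eqref{eq:actionH2} is evaluated on the classical trajectory $\tilde{\gamma}_c$, every contribution from varying the path cancels: the bulk term vanishes because $\tilde{\gamma}_c$ solves Hamilton's equations, and the endpoint contribution $\big(p(T) - \partial f_2/\partial q'\big)\delta q(T) - \big(p(0) - \partial f_1/\partial q\big)\delta q(0)$ vanishes by the Lagrangian boundary conditions \eqref{eq:LBC}. Hence only the explicit parameter dependence survives, giving $\partial \tilde{S}_{\tilde{\gamma}_c}/\partial b_1 = \derf{1}{b}{1}(q,b_1)$ with $q=q_c(0)$. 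Differentiating once more in $b_2$, and noting that the only $b_2$-dependence on the right is through the endpoint $q=q_c(0)$, yields
\begin{equation*}
    \frac{\partial^2 \tilde{S}_{\tilde{\gamma}_c}(b_1,b_2)}{\partial b_1 \partial b_2} = \derderf{1}{q}{}{b}{1}\,\frac{\partial q_c(0)}{\partial b_2},
\end{equation*}
exactly mirroring \eqref{eq:dSdb1db2}.

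The second step computes $\partial q_c(0)/\partial b_2$. With $b_1$ held fixed I would parametrize classical solutions by their initial position $q=q_c(0)$, so that the initial velocity is $\dot q_c(0)=\tfrac{1}{m}\,\partial f_1/\partial q$; this is precisely the family $y(t)=\partial q_c(t)/\partial q$ of Lemma IV.2. The value of $q$ is then pinned down implicitly by the final Lagrangian condition $m\dot q_c(T)=\partial f_2/\partial q'\big(q_c(T),b_2\big)$. Implicit differentiation in $b_2$, using $\partial^2 f_2/\partial q'^2=a_2$, gives
\begin{equation*}
    m\left(\frac{\partial \dot q_c(T)}{\partial q} - \frac{a_2}{m}\frac{\partial q_c(T)}{\partial q}\right)\frac{\partial q_c(0)}{\partial b_2} = \derderf{2}{q'}{}{b}{2}.
\end{equation*}
By Corollary IV.2.1 the parenthesized factor is exactly $\tfrac12\det{}_\zeta A$, so $\partial q_c(0)/\partial b_2 = 2\,\derderf{2}{q'}{}{b}{2}\big/(m\det{}_\zeta A)$. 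Substituting this into the mixed second derivative above produces the claimed identity.

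The main obstacle is bookkeeping rather than analysis: I must track which quantities are held fixed at each differentiation — in particular that derivatives of $f_1,f_2$ with respect to $b_1,b_2$ ignore the implicit dependence of the endpoints on those parameters, as flagged just after \eqref{eq:dSdb1} — and I must confirm that the parametrization by initial position is a legitimate chart. The latter amounts to requiring the coefficient $\tfrac{m}{2}\det{}_\zeta A$ to be nonzero, which is exactly the nondegeneracy that makes the implicit function theorem applicable and renders the final formula meaningful.
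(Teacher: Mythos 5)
Your proposal is correct and takes essentially the same route as the paper's proof: you reduce the mixed derivative to $\frac{\partial^2 f_1}{\partial q \, \partial b_1}\frac{\partial q}{\partial b_2}$ (the paper's equation \eqref{eq:derderS}), then obtain $\frac{\partial q}{\partial b_2}$ by differentiating the terminal Lagrangian condition $m\dot q_c(T) = \frac{\partial f_2}{\partial q'}$ with respect to $b_2$ through the initial position, and finally identify the resulting coefficient as $\tfrac{m}{2}\det{}_\zeta A$ via Corollary IV.2.1, exactly as the paper does. Your explicit attention to the implicit-function-theorem nondegeneracy ($\det{}_\zeta A \neq 0$) is a small elaboration the paper leaves tacit, not a different method.
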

\begin{proof}
Let's start by taking derivatives of the action at the critical value,
\begin{equation}
    \frac{\partial^2 \tilde{S}_{\tilde{\gamma}_c} (b_1,b_2)}{\partial b_1 \partial b_2} = \derderf{1}{b}{1}{q}{} \frac{\partial q}{\partial b_2} \label{eq:derderS}
\end{equation}
Recall the second boundary condition from equation \eqref{eq:LBC}. Taking the derivative with respect to $b_2$ gives,
\begin{equation*}
    \derderf{2}{q'}{}{b}{2} + a_2 \cdot \frac{\partial q'}{\partial b_2} = \frac{\partial p'}{\partial b_2}
\end{equation*}
The above uses the notation $q(T) = q'$, $p(T) = p'$, and the shorthand given in equation \eqref{eq:a1a2def}. Let us rewrite the above, using relation $p(t) = m\dot{q}(t)$.
\begin{equation*}
    \derderf{2}{q'}{}{b}{2} = - a_2 \cdot \frac{\partial q'}{\partial q}\frac{\partial q}{\partial b_2} + m \cdot \frac{\partial \dot{q}'}{\partial q}\frac{\partial q}{\partial b_2}
\end{equation*}
Now let's use Corollary IV.2.1 to replace the right hand side of the above, 
\begin{equation*}
    \derderf{2}{q'}{}{b}{2} = \left(\frac{m}{2} \det{}_\zeta A\right) \cdot \frac{\partial q}{\partial b_2}
\end{equation*}
Solving for $\partial q/\partial b_2$ and plugging the results into equation \eqref{eq:derderS} yields the statement in the theorem.
\end{proof}
\noindent The following corollary is an immediate consequence of Theorem IV.3,
\begin{corollary}
The lattice-regularize determinant and $\zeta$-regularized determinant of $A$ relate in the following manner,
\begin{equation*}
    \det{}_{\text{reg}}A = \frac{1}{2}\det{}_\zeta A
\end{equation*}
\end{corollary}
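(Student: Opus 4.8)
The plan is to obtain the corollary by directly combining the two generalized Gelfand-Yaglom formulas already established for the two regularizations. Both formulas express the same geometric quantity — the mixed second derivative of the on-shell action with respect to the Lagrangian parameters $b_1,b_2$ — in terms of the respective determinant, so eliminating that common quantity between them yields the stated proportionality.

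First I would recall the lattice-regularized formula \eqref{eq:GYlatticeregA},
\begin{equation*}
    \det{}_{\text{reg}} A = \frac{ \derderf{1}{q}{}{b}{1} \derderf{2}{q'}{}{b}{2}}{m\frac{\partial^2 \tilde{S}_{\tilde{\gamma}_c}(b_1,b_2)}{\partial b_1 \partial b_2}},
\end{equation*}
derived in section \ref{section3.3} under the standing assumptions of the one-dimensional case, odd $N$, and Hamiltonian $H(p,q) = p^2/2m + V(q)$. Next, Theorem IV.3 supplies, under the identical assumptions, the $\zeta$-regularized counterpart
\begin{equation*}
    \frac{\partial^2 \tilde{S}_{\tilde{\gamma}_c}(b_1,b_2)}{\partial b_1 \partial b_2} = 2\frac{\derderf{1}{q}{}{b}{1} \derderf{2}{q'}{}{b}{2}}{m \det{}_\zeta \, A}.
\end{equation*}

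I would then substitute the second displayed equation into the denominator of the first. The product $\derderf{1}{q}{}{b}{1} \derderf{2}{q'}{}{b}{2}$ appears in both numerator and denominator and cancels, as does the factor $m$, leaving $\det{}_{\text{reg}} A = \frac{1}{2}\det{}_\zeta A$, which is exactly the claim.

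The computation is a single algebraic simplification with no analytic content, so the only point needing care is confirming that the two input formulas are stated under matching conventions: both are derived for the one-dimensional system with $H = p^2/2m + V$, and the lattice formula additionally relies on the odd-$N$ convention used to pass from \eqref{eq:GYspecialH} to \eqref{eq:latticeregtildeA}. Since the cross-derivatives of $f_1$ and $f_2$ are taken nonvanishing (as required for the Lagrangian boundary conditions to be nondegenerate), the cancellation is legitimate and introduces no division by zero. Thus the main — and indeed only — obstacle is the bookkeeping of these conventions rather than any substantive estimate.
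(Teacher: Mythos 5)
Your proposal is correct and is exactly the paper's argument: the corollary is stated there as an immediate consequence of Theorem IV.3, obtained precisely by eliminating $\frac{\partial^2 \tilde{S}_{\tilde{\gamma}_c}(b_1,b_2)}{\partial b_1 \partial b_2}$ between equation \eqref{eq:GYlatticeregA} and the formula of Theorem IV.3. Your added remarks on matching conventions (one dimension, odd $N$, $H = p^2/2m + V(q)$) and nonvanishing cross-derivatives of $f_1$, $f_2$ are sound bookkeeping that the paper leaves implicit.
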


\section{Concluding remarks}
\noindent The lattice-regularization, compared to zeta regularization, gives an alternative, possibly more natural, method of regularization for the Hamilton-Jacobi operator. It follows from equations (24) and (43) that the two methods are closely related. In cases of Hamiltonians with mixed derivatives, the lattice-regularization presents a potentially easier method of computing the regularized determinant (where we do not have a typical Gelfand-Yaglom formula). However, the convergence of \eqref{eq:DGYgen} needs to be better understood in this case. \\

\noindent As seen in Theorem IV.1, one immediate shortcoming of this work is it does not obviously relate to the case of Dirichlet boundary conditions. To obtain similar results that relate to the Dirichlet case, one must switch the roles of $p$ and $q$ in the Lagrangian boundary conditions and re-derive most of the formulas. While most of the details will follow immediately from the work here, it would take a concerted effort. \\

\noindent In the future, we would like to extend these results to the quantum field theory setting. Again, we hope that convergence of a descrete formula will give an alternative method for computing the regularized determinant for field theories. One could also consider generalizing these results for more general boundary conditions such as those considered by Burghelea, Friedlander, and Kappeler\cite{Kep}.

\section*{Acknowledgements} 
I would like to thank my advisor, Nicolai Reshetikhin, for all the guidance and Leon Takhtajan for providing thoughtful comments. This work was partly supported by the NSF FRG Collaborative Research Grant DMS-1664387 and thanks to the hospitality of ETH-ITS.
\section*{Data Availability Statement}
Data sharing is not applicable to this article as no new data were created or analyzed in this study.

\appendix 
\section{Description of matrices}

Let us explicitly describe the matrices used throughout the proof of Theorem 2.1. We denote the vectors, $\p_i = (p_i^1, p_i^2, \dots, p_i^n)$ and $\q_i = (q_i^1, q_i^2, \dots, q_i^n)$. We define,
\begin{align*}
    \left(\derderh{\p}{k}{\p}{k}\right)_{ij} = \derderh{p^i}{k}{p^j}{k} &\quad\quad \left(\derderh{\p}{k}{\q}{k}\right)_{ij} = \derderh{p^i}{k}{q^j}{k} \\
    \left(\derderh{\q}{k}{\q}{k}\right)_{ij} = \derderh{q^i}{k}{q^j}{k} &\quad\quad \left(\derderf{1}{\q}{1}{\q}{1}\right)_{ij} = \derderf{1}{q^i}{1}{q^j}{1} \\
    \left(\derderf{2}{\q}{N}{\q}{N}\right)_{ij} = \derderf{2}{q^i}{N}{q^i}{N} &\quad\quad
    \left(\derderf{1}{\q}{1}{\vec{b}}{1}\right)_{ij} = \derderf{1}{q^i}{1}{b^j}{1} \\
    \left(\derderf{1}{\q}{N}{\vec{b}}{2}\right)_{ij} = \derderf{2}{q^i}{N}{b^j}{2} &\quad\quad \left(\derq{k}{2}\right)_{ij} = \frac{\partial q^i_k}{\partial b^j_2}
\end{align*}
where all of the above matrices are $n \times n$.

\section{Proof of Lemma 2.2}
Here we will assume the matrices $\partial^2 H / \partial \p_i{}^2$ are invertible for all $i=1,\dots, n$. Thus we can write the determinant of $\tilde{A}_N$ as, 
\begin{equation*}
    \det(\tilde{A}_N) = \det(D_1)\det(D_4-D_3D_1^{-1}D_2)
\end{equation*}
where the matrices $D_i$ for $i = 1, 2, 3, 4$. are described by equation \eqref{eq:blocktildeAN}. The matrix $D_4 - D_3D_1^{-1}D_2$ is a block tridiagonal matrix and so we may write the resulting determinant as described in \cite{0712.0681},
\begin{equation*}
    \det(D_4-D_3D_1^{-1}D_2) = (-1)^{Nm} \det(T_{11}^{(0)})\det(B_1 \cdots B_{N-1})
\end{equation*}
Where the matrices $B_i$ are defined in section 2.2 and the matrix $T_{11}^{(0)}$ is given by, 
\begin{equation*}
    T_{11}^{(0)} = V_2^TT_{N-1}\cdots T_2V_1
\end{equation*}
From the above statement, Lemma 2.2 is clear.

\bibliography{GYpaper}

\end{document}